\newtheorem{example}{Example}
\newtheorem{definition}{Definition}
\newtheorem{theorem}{Theorem}
\newtheorem{lemma}{Lemma}
\newcommand{\qlang}{\ensuremath{\mathtt{QLang}}\xspace}
\newcommand{\experiments}{\ensuremath{E}\xspace}
\newcommand{\erlang}{\ensuremath{\mathtt{RLang}}\xspace}
\newcommand{\exscheme}{\ensuremath{\mathtt{ELang}}\xspace}
\newcommand{\quest}{\ensuremath{\mathsf{que}}\xspace}
\newcommand{\answer}{\ensuremath{\mathsf{ans}}\xspace}
\newcommand{\request}{\ensuremath{\mathsf{req}}\xspace}
\newcommand{\response}{\ensuremath{\mathsf{rsp}}\xspace}
\newcommand{\decomp}{\ensuremath{\mathsf{decompose}}\xspace}
\newcommand{\aggregate}{\ensuremath{\mathsf{aggregate}}\xspace}
\newcommand{\experiment}{\ensuremath{\mathsf{exp}}\xspace}
\newcommand{\exresult}{\ensuremath{\mathsf{res}}\xspace}
\newcommand{\execute}{\ensuremath{\mathsf{execute}}\xspace}
\newcommand{\complete}{\ensuremath{\mathsf{complete}}\xspace}
\newcommand{\compute}{\ensuremath{\mathsf{compute}}\xspace}
\newcommand{\fulfilled}{\ensuremath{\mathsf{fulfilledBy}}\xspace}
\newcommand{\answered}{\ensuremath{\mathsf{answeredBy}}\xspace}
\newcommand{\dist}[1]{\left|#1\right|}
\newcommand{\justify}{\ensuremath{\mathtt{justify}}\xspace}
\colorlet{keywordcolor}{blue!50!black}
\colorlet{commentcolor}{green!60!black}
\colorlet{typecolor}{green!70!black}
\lstdefinelanguage{pseudo}{
        basicstyle=\ttfamily\footnotesize\upshape,
        keywordstyle=\color{keywordcolor}\bfseries\sffamily,
        commentstyle=\slshape\rmfamily\color{commentcolor},
        columns=fullflexible,
        mathescape=true,
        escapechar={\#},
        keepspaces=true,
        showstringspaces=false,
        aboveskip=8pt, % default is \medskipamount
        numbers=left,
        stepnumber=1, 
        numberstyle=\ttfamily\scriptsize\color{gray},
        numbersep=4pt,
        xleftmargin=1.5em,
        xrightmargin=1.5em,
        framexleftmargin=1.2em,
        framexrightmargin=1em,
        framextopmargin=0.5ex,
        breaklines=true,
        breakindent=3pt,
        keywords={function,Input:,Output:,do, if,return,end,else,foreach},keywordstyle=\color{keywordcolor}\bfseries\sffamily,
        % standard types:
        morekeywords=[2]{link, load, anchor, retrieve, extends},
        keywordstyle=[2]\color{typecolor},
        sensitive=true,
        comment=[l]{//},
        morecomment=[s]{/*}{*/},
        morestring=[b]"
        }
\begin{document}
\begin{frontmatter}
\title{Reasonable Experiments in Model-Based~Systems~Engineering}

\author{Johan Cederbladh}
\ead{johan.cederbladh@mdu.se}
\address{Mälardalen University, Sweden}
\author{Loek Cleophas}
\ead{l.g.w.a.cleophas@tue.nl}
\address{Eindhoven University of Technology, the Netherlands,\\ and Stellenbosch University, South Africa}
\author{Eduard Kamburjan}
\ead{eduard.kamburjan@itu.dk}
\address{IT University of Copenhagen, Denmark,\\ and University of Oslo, Norway}
\author{Lucas Lima}
\ead{lucas.albertins@ufrpe.br}
\address{Universidade Federal Rural de Pernambuco, Brazil}
\author{Rakshit Mittal}
\ead{rakshit.mittal@uantwerpen.be}
\author{Hans Vangheluwe}
\ead{hans.vangheluwe@uantwerpen.be}
\address{University of Antwerp - Flanders Make, Belgium}
\begin{abstract}
With the current trend in Model-Based Systems Engineering towards Digital Engineering and early Validation \& Verification, \textit{experiments} are increasingly used to estimate system parameters and explore design decisions. Managing such experimental configuration metadata and results is of utmost importance in accelerating overall design effort. In particular, we observe it is important to 'intelligent-ly' reuse experiment-related data to save time and effort by not performing potentially superfluous, time-consuming, and resource-intensive experiments.
In this work, we present a framework for managing experiments on digital and/or physical assets with a focus on case-based reasoning with domain knowledge to reuse experimental data efficiently by deciding whether an already-performed experiment (or associated answer) can be reused to answer a new (potentially different) question from the engineer/user without having to set up and perform a new experiment. We provide the general architecture for such an experiment manager and validate our approach using an industrial vehicular energy system-design case study.
\end{abstract}

\begin{keyword}
experiment reuse \sep MBSE \sep simulation \sep systems engineering \sep experiment modelling
\end{keyword}
\end{frontmatter}

\section{Introduction}\label{sec:intro}
%\paragraph{Motivation}

Systems Engineering (SE) at its core revolves around \textit{risk} management~\cite{walden2023systems,haberfellner2019systems}. Carefully crafted design procedures are employed throughout the development process to reduce uncertainty until a definitive design decision can be made~\cite{cederbladh2023early}. Early design stages are particularly critical to reduce uncertainty and subsequent development effort. In this context, Model-Based Systems Engineering (MBSE) is recognized as the \textit{future of SE}~\cite{madni2018model,henderson2021value} and enables earlier Verification and Validation (V\&V) of system design through the early and continuous availability of models~\cite{cederbladh2023early}. Models enable digital experiments and management of analytical results, and experiment data gathered in previous endeavors. One aim is to reduce workload and cost through \emph{reuse} of experiments. %, which contributes to higher confidence in early decisions~\cite{cederbladh2024road}.

Both virtual experiments with models~\cite{puntigam2020integrated,abdo2022model} in MBSE and physical experiments in classical SE~\cite{walden2023systems,fisher1966design} are highly complex activities that require  elaborate structures enabling users to perform the right experiments for their queries. The optimal reuse of experimental data requires (a) careful data and model management~\cite{10.1145/3652620.3688221, 10.1145/3652620.3688223}, and (b) deep domain knowledge to conclude whether a new situation requires new experiments or can be reasoned over using existing results. However, frameworks realizing these features are often developed ad-hoc at companies and are not systematically deployed. This hampers the adoption of early V\&V in SE~\cite{suryadevara2018adopting,gregory2020long,gustavsson2024success}, especially in MBSE which requires additional knowledge about models: lack of explicitly represented knowledge about models and the domain are the critical bottlenecks that hamper reuse efforts in MBSE.

In our previous work \cite{10350626} we briefly introduced the general framework for reuse of experiments, without a formal description or evaluation on a reference implementation. In this paper, we build on this previous work, and address the gap of a common framework to enable efficient early decision-making through reuse of experiments, using domain knowledge. 
To this end, we (a) formalize experiment-related structures, ranging from high-level user queries to low-level experiment specification, (b) describe how reuse is enabled through different case-based reasoning techniques over these structures, (c) devise a software architecture to implement experiment structures and their reuse, and (d) evaluate our framework with an industrial case-study. In contrast to prior model and experiment modeling approaches, e.g., those for validity frames~\cite{AckerMVD24,mittal23},  planning of experiments or explainability, we focus on modeling aspects of experiments related to reuse i.e. it is assumed that the models have been validated prior to their reuse.

% \jc{This could in my mind be moved to a section after the current section 2/3 - perhaps something like "on experiments"}

\paragraph{Experiment Structures}
Experiment management comprises three different layers as shown in \Cref{fig:intro-framework}: A \emph{user layer} exposes the system to the user and accepts user queries, a \emph{decomposition layer} decomposes a user query into a set of experiment requests, and an \emph{execution layer} schedules, executes, and records experiments.

The user layer has the capability to expose multiple languages conforming to queries from different user perspectives. %, e.g., from the perspective of an engineer designing equipment, or from the perspective of a sales person, who is interested in more high-level questions. 
For example, consider a company designing trains. Non-technical users may query whether a given train configuration is safe according to some regulation, which may involve a number of experiments that they do not have the knowledge to set up. Technical users ask more specific queries, e.g., whether \textit{``a particular train configuration $x$ stops after $y$ meters in a particular scenario $z$?''}. 

A query is a more abstract construct than an experiment request and an experiment. Internally, a user query corresponds to one or more experiment \emph{requests}. The decomposition layer encodes the reasoning used to decompose such user queries into experiment requests. In contrast to a query, a request is a technical perspective on the experimental activity and not the system-under-study (views on the system-under-study are handled by the user layer). Experiment requests are higher-level (than experiment specifications) descriptions of experiments \emph{and their relation to the user query}. Each request is fulfilled by exactly one response.

Finally, an experiment must be fully specified and performed by the execution layer -- \emph{experiment specifications} fully describe the experiment, including internal structure such as (physical) setup, enabling complete traceability including vertical, horizontal, and fine-grained traceabilities \cite{RYS2024100720}. Each experiment specification results in exactly one experiment \emph{result} that is returned from the execution layer. Since an \emph{experiment specification} is highly domain-specific, we remain agnostic to any \emph{experiment specification} scheme in our framework, while tracing the requirements that such a specification needs to satisfy in order to be compatible with our framework. 

\begin{figure}[h]
\centering
\includegraphics[width=\linewidth]{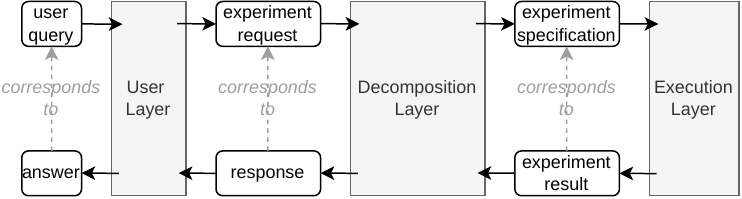}
\caption{High-level view of the proposed framework}
\label{fig:intro-framework}
\end{figure}

\paragraph{Reuse through Reasoning}
Reuse is possible within each of the above described layers and the distinction between user query and experiment requests/specifications enables to implement reuse on multiple levels:
\begin{itemize}
\item caching (by memoization) and reusing answers to user queries,
\item domain knowledge about the systems, models and experiments, to reuse responses to experiment requests, and
\item an experiment store to reuse experiment results from experiment specifications.
\end{itemize}
Nonetheless, the underlying mechanisms are similar for all layers. 

While simple reuse is realized by retrieving the answer to an already posed query or request, we also employ \emph{reasoning-based reuse considering explicitly represented knowledge}. Responses, experiment data, and the posed queries and requests are accumulated in a \emph{data store} equipped with \emph{case-based reasoning} (CBR)~\cite{cbr} capabilities, which attempt to derive an answer from a previously reached answer without delegating to the next lower-level layer. 
\begin{itemize} 
\item A naive example of CBR -- if the train does not stop in a certain setup within $x$ meters, then it will surely also not stop within $y < x$ meters in the same setup. The first answer thus justifies an answer to the second, without the need to perform a new experiment. 
\item A more complex example -- if the train is found to stop in $x$ meters in an experiment with $a$ initial velocity, then the train will surely stop within $y > x$ meters with $b < a$ initial velocity (since the stopping distance is proportional to the initial velocity, see \Cref{sec:example}) and everything else in the experiment configuration being identical. Thus the first answer can be used to infer an answer for the second query, without the need to perform a new experiment.
\end{itemize}

As CBR is limited to symbolic and precise reasoning, but in early V\&V a certain level of imprecision is acceptable, we also devise \emph{fuzzy retrieval} through similarity search~\cite{DBLP:series/ads/ZezulaADB06}.
We do so in two ways:
\begin{itemize}
\item Within a given level of uncertainty within early SE, when a query is considered similar enough to a previous one, then the previous answer can be reused directly. This answer is not the exact one to the posed query, but has a bounded distance to the (implicit) correct answer. For example, if the train does not stop within 400$m$, then the query whether it stops within 401$m$ may be similar enough to not need a new exact experiment execution.
\item The second approach describes when a query is considered similar enough, such that the same result data can be used to compute a new answer. This is useful, for example, for a changed property of interest. For example, if there is an experiment for a train with a certain configuration, performed because of a query about its stopping distance, then a query about its energy consumption with the same configuration can reuse already recorded data.
\end{itemize}

We evaluate our proposed framework with an industrial energy systems design case study, namely architecture design for battery system layouts~\cite{Cederbladh6985}, to demonstrate that we can indeed reuse a substantial number of experiments.

\paragraph{Contributions and Structure}
Our main contributions are (a) a formalization of experiment reuse, (b) a mapping of this formalization to a software architecture, and (c) an evaluation (with reproducible artifacts) using an industrial energy system design case study. The formalization is illustrated with a running example of a highly-simplified academic train braking model and the simulation of its physics, as has been done in previous paragraphs. 

This work is structured as follows. \Cref{sec:overview} provides background in MBSE and describes the energy system design as a motivating example. \Cref{sec:fomalize} formalizes experiment management and \Cref{sec:reasoning} reasoning over experiments.
\Cref{sec:architecture} gives the software architecture, \Cref{sec:evaluation} describes our evaluation, \Cref{sec:background} discusses related work and \Cref{sec:conclusion} concludes with a discussion.

\section{Background \& Use Case}\label{sec:overview}
% \jc{Development process and context}

%\jc{I have not changed this part apart from minor things - 2024-10-15}

%\jc{Harmonise this with the introduction, argue the case.}

\paragraph{(Model-Based) Systems Engineering}
SE is a process-driven discipline that focuses on processes managing (and systematically reducing) the risk of (the system to-be-delivered) being (potentially) \emph{wrong} while keeping development efficient~\cite{walden2023systems,haberfellner2019systems}.
Development happens in stages, such as seen in the ISO/IEEE/IEC 15288 reference life cycle standard\footnote{\url{https://www.iso.org/standard/63711.htm}} and the commonly used V-model.
Experiences in SE show that the cost of addressing issues in a system design increases exponentially as the development progresses~\cite{cederbladh2023early, TEKINERDOGAN202145}. Therefore, it is of interest to detect and address any issues as early as possible. Similarly, making design decisions as early as possible is valuable as it reduces overall development time/effort~\cite{madni2018model,gregory2020long}. In many SE contexts this is made increasingly important due to the often high degree of variability which results in large ranges of solutions~\cite{puntigam2020integrated,bilic2019integrated}.
MBSE is the formalization of SE through the use of models as the primary artifacts in SE processes \cite{wymore2018model}. There are many reported incentives for adopting MBSE~\cite{henderson2021value,gregory2020long}, but the one of interest here is that it enables more robust early and continuous V\&V.

\paragraph{Early stage decision-making} The \emph{early stage} of development is difficult to define independently of the concrete SE workflow~\cite{cederbladh2023early,cederbladh2024road}, as it needs to be \emph{early} only related to the rest of the workflow. 
Nonetheless, a few commonalities are shared regardless of specific context: (1) The threshold for acceptable uncertainty is higher than in later development, (2) the goal is pruning the design space but not necessarily making a singular design choice, (3) the prioritization concerns models and knowledge reuse over the contextual model accuracy, and (4) the effort of new model development is low.

Thus, early decision-making or V\&V should focus on reducing consequent effort by re-using models and knowledge from previous endeavors, instead of spending much time in creating new artifacts~\cite{cederbladh2024road}. Additionally, as the early stage inherently entails high uncertainty due to the nature of development, the granularity of analysis is often limited or requires extensive encoding of domain-specific knowledge~\cite{aroonvatanaporn2010reducing}.

Employing MBSE enables the concrete \& explicit formalization of domain knowledge held by practitioners. As a result early decision-making can be done by leveraging models as opposed to engineering know-how and best practices~\cite{madni2018model,gregory2020long}. Likewise, creating and managing models as stand-ins for physical prototypes \& testing can significantly reduce resources invested in the gathering of evidence during decision-making~\cite{puntigam2020integrated,hallqvist2023realization}. 

\paragraph{Use case}
We use the industrial case study of the design and development of an energy system for earth moving machinery in an MBSE context from the AIDOaRt research project~\cite{Cederbladh6985}, which is illustrative of both the need for reuse, and the challenges to implement it. The domain is historically hardware-intensive with rich foundations in product-line engineering for hardware systems~\cite{suryadevara2018adopting,sjoberg2017industrial}, which makes physical experiments necessary but expensive. In this context, system development relies on hardware modularity through interface management~\cite{bruun2015plm}, resulting in highly customizable product configurations~\cite{bilic2019integrated}. As a result, most earth moving machines are \textit{unique} yet belong to larger product families. %Development of large systems like earth moving machinery is a long process, partially due to the system scope, but also due to extensive legislation and regulations that need to be accounted for during design, such as safety and security. 

%More recently, customer demands for sustainable solutions have necessitated development of electrical construction machines% (e.g., dumpers, excavators, haulers, etc). 
To store and supply energy for these earth-moving machines, large-scale batteries are often integrated with the machine to achieve continuous operation for longer durations without recharging. Large-scale batteries are highly complex and offer wide configuration options during design~\cite{hannan2021battery}. Therefore, battery systems consider many decision-points, starting from estimation of total required energy capacity, followed by cell selection and sizing, total energy system architecture, detailed safety and thermal analysis, lifetime estimations, Thermal Management System (TMS) design, prototyping, etc. The \emph{early} decision-points consider numerous trade-offs and stakeholders, commonly perceived as a bottleneck. For the case study, we extract two sub-stages of energy system design: battery design and TMS design for earth moving machinery, similar to~\cite{Cederbladh6985}.

%\begin{figure}[b]
%\centering
%\includegraphics[width=0.8\linewidth]{images_and_slides/NewArchitecture.pdf}
%\caption{A battery simulation architecture from paper about battery trade-off analysis \cite{bergelin2022early}}
%\label{fig:BatteryARch}
%\end{figure}

% \jc{Batteries}

%Batteries in electrical machines are commonly of the Li-ion type chemistry, due to relative longevity and high energy density \cite{panwar2021recent}. 
The architecture of a battery system is modular, to allow standardized battery modules to be connected in series and parallel with little integration efforts. By architecting different topologies, the total capacity, nominal voltage, maximum discharge, and other characteristics are easily parametrized. The main trade-off with a larger battery is the cost, both directly through the monetary value of cells themselves, but also indirectly from increased thermal management needs, mass, and volume. Finding a minimal battery conforming to the requirements of the (earth-moving) machine is both essential and a challenge, needing to satisfy functional demands from the customer use case and safety concerns governed by standards, while having an expected lifetime of 10-20 years. To evaluate battery designs, trade-off simulations are often run with different configurations using parameterized models with standard drive-cycles as seen in \Cref{fig:battery}, the parameters being \emph{BatteryVoltage}, \emph{MaxTorque}, etc. These simulations of different battery configurations are an important step in determining the sizing and scope of the battery, especially in the early system design phases. A battery configuration is considered unstable if its state of charge $\mathsf{SoC}$ drops below 50\% at any point during the drive cycle. \Cref{fig:battery-stable-unstable} shows the evolution of the state of charge of the battery system with stable and unstable configurations.

\begin{figure}[t]
\centering
\includegraphics[width=\linewidth]{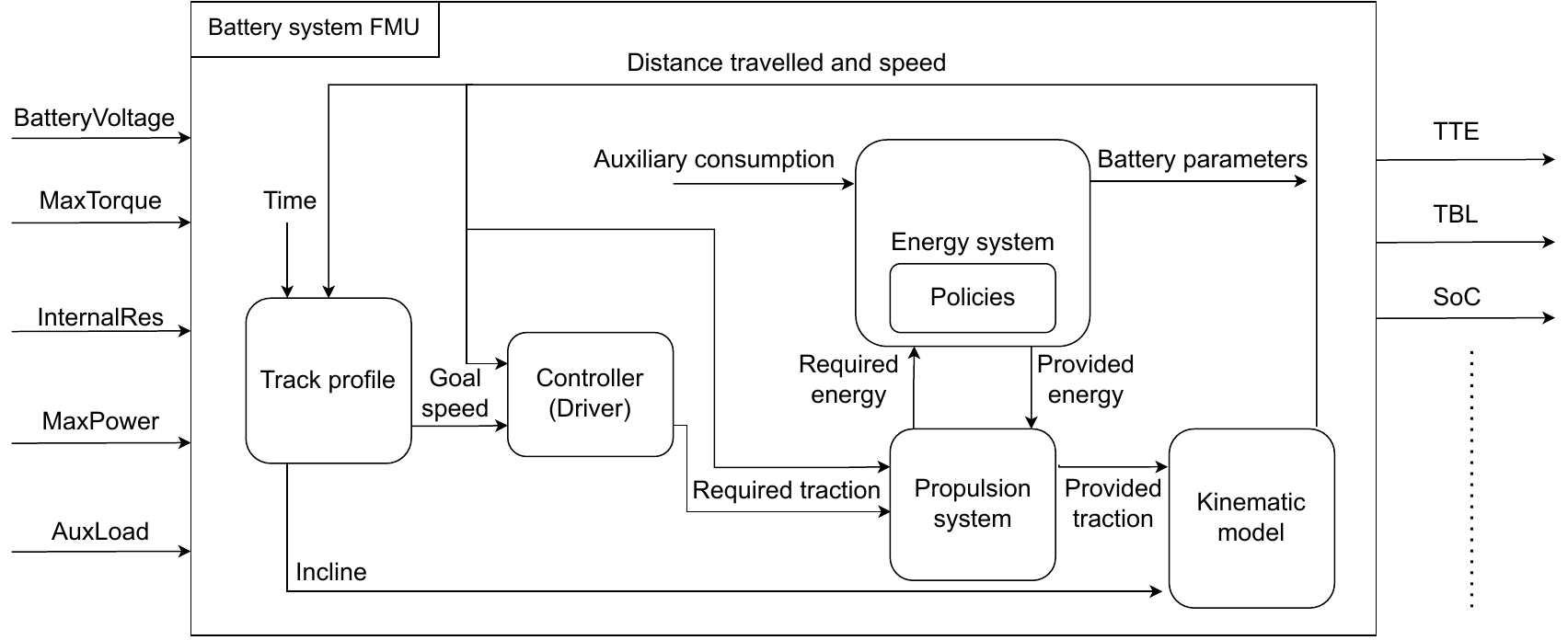}
\caption{High-level architecture architecture of a battery FMU used in early analyses, from the case-study.}
\label{fig:battery}
\end{figure}

%A TMS provides several functions, such as acting as an interface to the battery in addition to safety mechanisms. During operation, the battery generates heat (due to internal losses and chemical processes), gradually increasing the internal temperature. For safe and optimal operation, there are specific temperature ranges the battery needs to maintain (usually a span of a few degrees Celsius). The TMS provides heating and cooling functions to the battery to achieve optimal temperature. Finding a good trade-off in battery size (i.e., cost) and TMS design is a crucial aspect of energy system design. The thermal regulation of a TMS is often based on heat transfer through fluid systems, where system components are part of a layout of the coolant flow. An example in SysML block diagram notation is seen in \Cref{fig:TMS} where connectors and ports correspond to cooling flow, and blocks are components of the system. It is worth noting that the TMS layout is rarely fixed and highly customize-able in terms of coolant flow rates. 

%\begin{figure}[bt]
%\centering
%\includegraphics[width=\linewidth]{images_and_slides/TMS.png}
%\caption{Exemplar TMS layout from the case-study, in SysML block diagram notation}
%\label{fig:TMS}
%\end{figure}

\begin{figure}
	\centering
	\includegraphics[width=\linewidth]{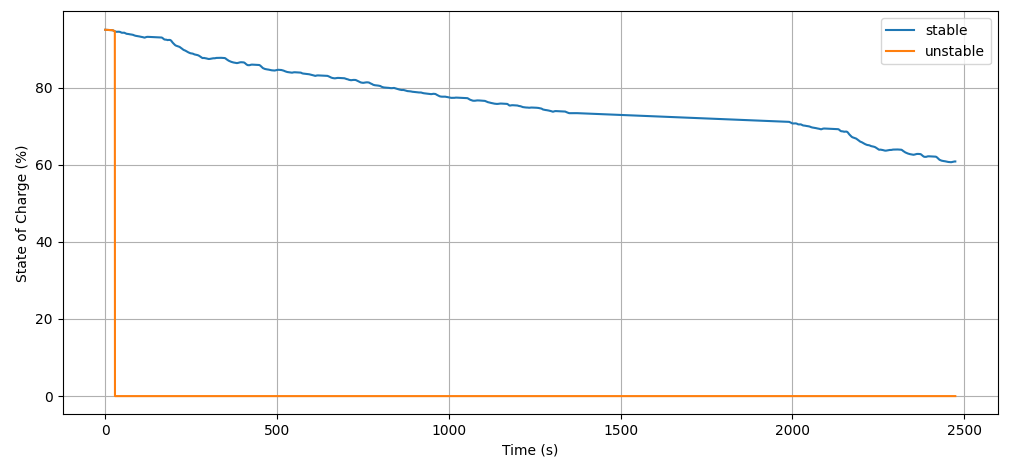}
	\caption{Evolution of $\mathsf{SoC}$ during simulation of the battery FMU in a standard drive cycle with (a) stable parameters ($\mathsf{Voltage} = 234.4V, \mathsf{MaxTorque} = 800.42Nm, \mathsf{InternalRes} = 0.03 \Omega$) and (b) unstable parameters ($\mathsf{Voltage} = 340.95V, \mathsf{MaxTorque} = 998.46Nm, \mathsf{InternalRes} = 0.50 \Omega$).}
	\label{fig:battery-stable-unstable}
\end{figure}

\paragraph{Challenge}
A common case encountered is that an engineer wants to reuse experimental results or experimental models from previous evaluations in early stages to reason about viability of energy system solutions. However, in our experience there is rarely a previous experiment that has the exact same conditions or system-under-study; instead there might be similar experiments or setups for some related variant configurations. This also causes issues in cross-domain concerns, where engineers have heterogeneous skill-sets (e.g. skills required for the battery system configuration are not the same as for TMS), leading to parallel work with many manual interaction points. To resolve this, the engineer needs to identify models or experiments and understand whether they are \textit{similar enough} to be reused for a given evaluation. Several aspects come into play: How can this distance between experiments be measured explicitly? How do the factors of uncertainty at the early stage affect the decision-making process? For the engineer, these aspects might often be embedded implicitly in domain expertise and already applied through various processes. %However, this reduces the option for others to replicate the process.

For the energy system design, the following needs are formulated from the engineering perspective for our proposed framework: (1) The framework should capture implicit knowledge by experts through formalization to the degree that experiments can be compared explicitly, and (2) the framework should provide automated reasoning mechanisms to support reuse of experiments, and more importantly judge when they cannot be reused.
These goals act as the requirements to drive the workflow to an acceptable status given the engineering needs from the industrial context. 
Further requirements are that it should guide software developers in implementing the solution and be integrated in the workflow.

\section{Formalizing Experiment Management}\label{sec:fomalize}
Reasoning about experiments, their data and specifications for reuse and management requires a precise formulation of (a) the overall workflow -- from the user asking a query to execution of experiments, (b) the supposed relations between the components and concepts in this workflow, and (c) the properties of these components and concepts. 
The overall workflow is structured in two semantically opposing directions:

\begin{figure}[bt]
    \centering
    \includegraphics[width=0.75\linewidth]{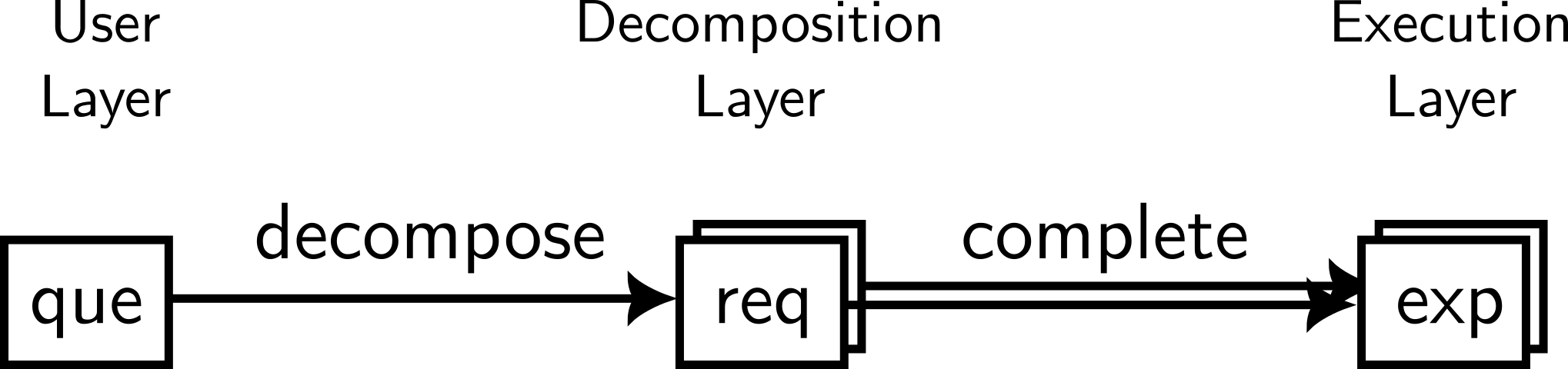}
    \caption{Decomposition and reduction of uncertainty.}
    \label{fig:dir1}
\end{figure}

\begin{itemize}
\item 
From the user query to the experiment specification and execution/deployment -- this direction centers on \emph{reducing uncertainty} in the deployment specification of experiments. The user layer reduces the uncertainty in the user query to generate relevant experiment requests; secondly, the decomposition layer further interprets experiment requests into concrete, deployable experiment specifications, thereby further reducing uncertainty. Thus, in this direction, the query is ultimately concretized and decomposed into a set of experiment specifications. This is illustrated in \Cref{fig:dir1}.
\item
The reverse direction, from experiment result to final answer -- this direction centers on \emph{reuse} of results and responses related to experiments. If a query is decomposed into experiment requests that match or are similar to (or inferable from, via CBR) requests posed in the past, then the specific layer reuses past information to compute the answer instead of delegating to the next lower-level layer. This is true for both This is illustrated in \Cref{fig:dir2}.
\end{itemize}

\begin{figure}[tb]
    \centering
    \includegraphics[width=0.75\linewidth]{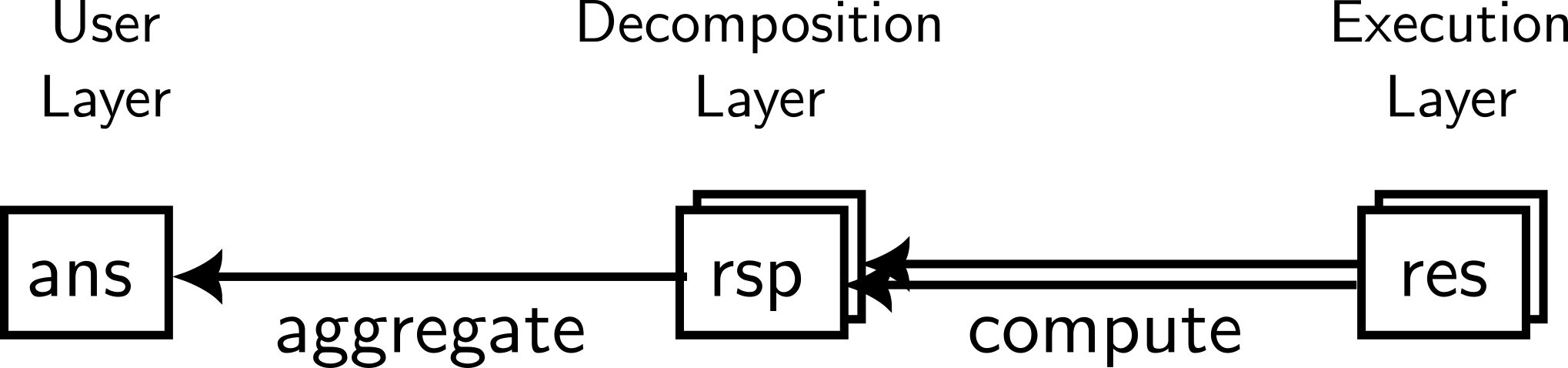}
    \caption{Reuse and composition.}
    \label{fig:dir2}
\end{figure}

We turn to queries of reuse, storage and data management in \Cref{sec:reasoning,sec:architecture} and first establish the properties and relations around  queries and experiments. We leave some elements, e.g., what exactly constitutes a ``validity frame''~\cite{AckerMVD24, mittal23} of an experiment, to be instantiated for concrete applications. However, we specify conditions that must be fulfilled to enable reuse. 

\subsection{Running example}
\label{sec:example}

To succinctly explain the tripartite experiment structures in our framework, we will use a highly simplified example of simulations of the physics to measure the braking distance of a train on a sloped track with friction. The Free-Body Diagram (FBD) of the train in \Cref{fig:fbd-train} depicts the forces acting on the train. From the FBD, the constant acceleration $a$ (negative deceleration) of the train is calculated as:

$$ a = - ( m \cdot g \cdot sin(\theta) + F_B + \mu \cdot N) $$

From the second equation of motion, $v^2 = u^2 + 2\cdot a \cdot s$, where $v (= 0)$ = final velocity, $u$ = initial velocity, the displacement $s$ can be computed as (by substituting the value of $a$ from above):

$$ s = \frac{u ^ 2}{2 \cdot (F_B + m \cdot g \cdot (sin(\theta) + \mu \cdot cos(\theta)))}$$

The displacement $s$ is directly proportional to the initial velocity $u$, and inversely proportional to the mass of the train $m$, braking force $F_B$, slope $\theta$, and coefficient of friction $\mu$ (when $0 \leq \mu \leq 1$ and $-10 \degree \leq \theta \leq 10 \degree$)\footnote{\url{https://en.wikipedia.org/wiki/List_of_steepest_gradients_on_adhesion_railways}}. A trace over time of the velocity and displacement from one simulation of the train example is presented in \Cref{fig:train-trace}.

\begin{figure}[tb]
    \centering
    \includegraphics[width=\linewidth]{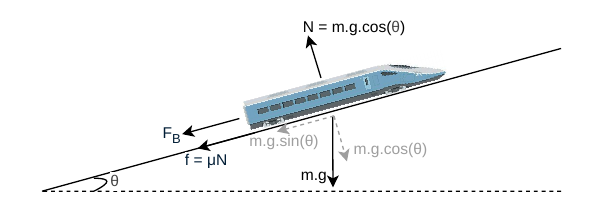}
    \caption{Free-body diagram of simplified train running example with $N$ = Normal force, $m$ = mass of train, $g$ = acceleration due to gravity on Earth's surface, $\theta$ = average slope of track, $F_B$ = braking force, $f$ = friction force, $0 < \mu < 1$ = friction coefficient.}
    \label{fig:fbd-train}
\end{figure}

\begin{figure}
     \centering
     \begin{subfigure}[b]{0.49\textwidth}
         \centering
         \includegraphics[width=\textwidth]{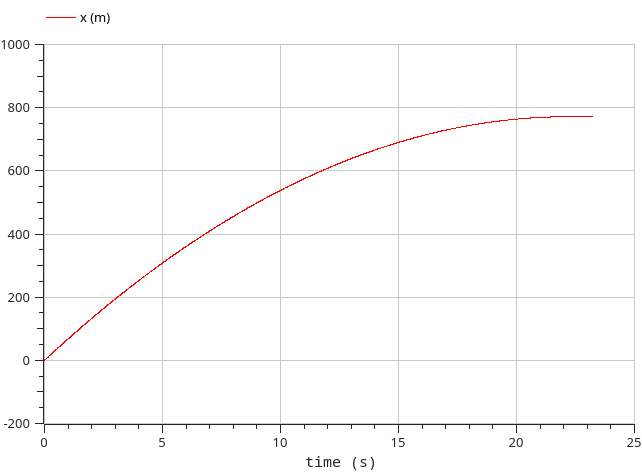}
         \caption{Displacement of train over time}
     \end{subfigure}
     \begin{subfigure}[b]{0.49\textwidth}
         \centering
         \includegraphics[width=\textwidth]{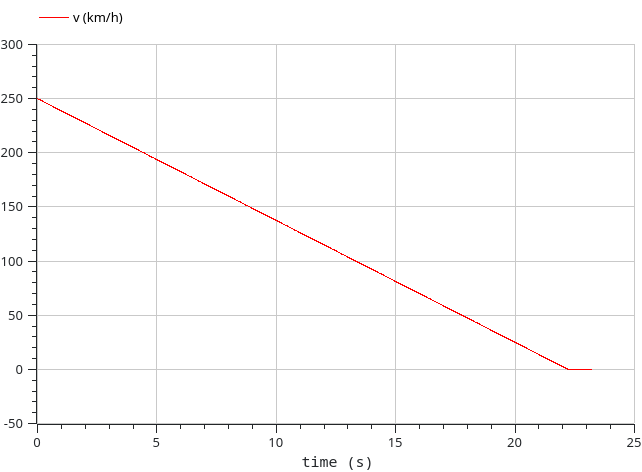}
         \caption{Velocity of train over time}
     \end{subfigure}
        \caption{Sample traces from one simulation of the train braking system.}
        \label{fig:train-trace}
\end{figure}

\subsection{Tripartite Experiment Management Structure}

% \begin{figure}[htb]
% \centering
% \begin{tikzpicture}
% \draw[draw=none, fill=gray!50] (-1.5,-0.5) rectangle (1.3,3.5);
%   \node[] (name) at (-0.4,3.1) {User Layer};
%   \node[] (QL) at (0,2) {\qlang};
% \draw[draw=none, fill=gray!50] (2.5,-0.5) rectangle (5.3,3.5);
%   \node[align=left] (name2) at (3.8,2.9) {Decomposition\\ Layer};
%   \node[] (RL) at (4,2) {\erlang};
% \draw[draw=none, fill=gray!50] (6.5,-0.5) rectangle (9.3,3.5);
%   \node[align=left] (name3) at (7.8,2.9) {Execution\\ Layer};
%   \node[] (EL) at (8,2) {\exscheme};
%   \node[] (Q) at (0,1) {\quest};
%   \node[] (A) at (0,0) {\answer};
%   \node[] (RQ) at (4,1) {\request};
%   \node[] (RP) at (4,0) {\response};
%   \node[] (EX) at (8,1) {\experiment};
%   \node[] (DT) at (8,0) {\exresult};

%   \draw[->] (QL) -- (Q) node [midway, left] {\textit{defines}};
%   \draw[->] (A) -- (Q) node [midway, right] {\answered};
%   \draw[->] (RL) -- (RQ) node [midway, left] {\textit{defines}};
%   \draw[->] (RQ) -- (RP) node [midway, right] {\fulfilled};
%   \draw[->] (EL) -- (EX) node [midway, left] {\textit{defines}};
%   \draw[->] (EX) -- (DT) node [midway, right] {\execute};
%   \draw[->] (Q) -- (RQ) node [midway, above] {\decomp};
%   \draw[->] (RQ) -- (EX) node [midway, above] {\complete};
%   \draw[->] (DT) -- (RP) node [midway, below] {\compute};
%   \draw[->] (RP) -- (A) node [midway, below] {\aggregate};
%   \end{tikzpicture}
%   \EK{The picture can be improved}
%     \caption{Structure of concepts without reuse and multiplicities. }
%     \label{fig:struct}
% \end{figure}

A user query can be asked from different perspectives, e.g., from the perspective of an engineer designing the train, or from the perspective of a sales person interested in high level queries like safety. %Each \emph{query} is more abstract than an \experiment{experiment}, and a query may be answered by a series or set of experiments. 
While we can relate a query to multiple corresponding experiments (exact match, fuzzy match, or CBR-inferred match), the relation of query and experiment does not substantiate a full \emph{specification} of the experiment, but only an abstract \emph{request}.
We derive three requirements from this: First, we must distinguish between different users and their queries; second, we must distinguish between a user query and its decomposition into experiment requests; and third, we must distinguish between an experiment request and a full experiment specification.

Consequently, our framework is structured in three layers.
Each layer is defined by a \emph{language}, which defines, on an abstract level, (a) possible calls to that layer as described in the \emph{decomposition} direction in \Cref{fig:dir1} and (b) replies to the calls as described in the \emph{reuse} direction in \Cref{fig:dir2}. This is illustrated in \Cref{fig:store}.

\begin{description}
    \item[User Layer] Users have access to a set of parametric \emph{queries} \quest that defines their perspective or view on the System-under-Study (SuS), and for each kind of query, expect a certain kind of \emph{answer} \answer. The query language \qlang of these queries is geared towards the user and not the low-level experiment execution.
    \item[Decomposition Layer] Internally, a user query may correspond to one or more experiment \emph{requests} \request. In contrast to a query, a request is a technical view on the experiments, not the SuS. Indeed, a query may be decomposed into several requests, e.g. a user query . The language of requests \erlang is geared towards describing an external view on experiments. Each request is fulfilled by exactly one \emph{response} \response.
    \item[Execution Layer] Finally, an experiment must be fully specified in order to be excuted. In contrast to a request, an \emph{experiment specification} \experiment fully describes the experiment, including internal structure such as (physical) setup.
    Each experiment specification results in one experiment \emph{result} \exresult. An experiment specification is an extended request defined by an experiment scheme language \exscheme.
\end{description}

The framework can be instantiated with multiple query languages for different properties and user groups, while the composition and execution layers are uniform throughout the application. In the following, we investigate each layer and establish the connections between them. 
    
\begin{figure}[b]
    \centering
    \includegraphics[width=0.65\linewidth]{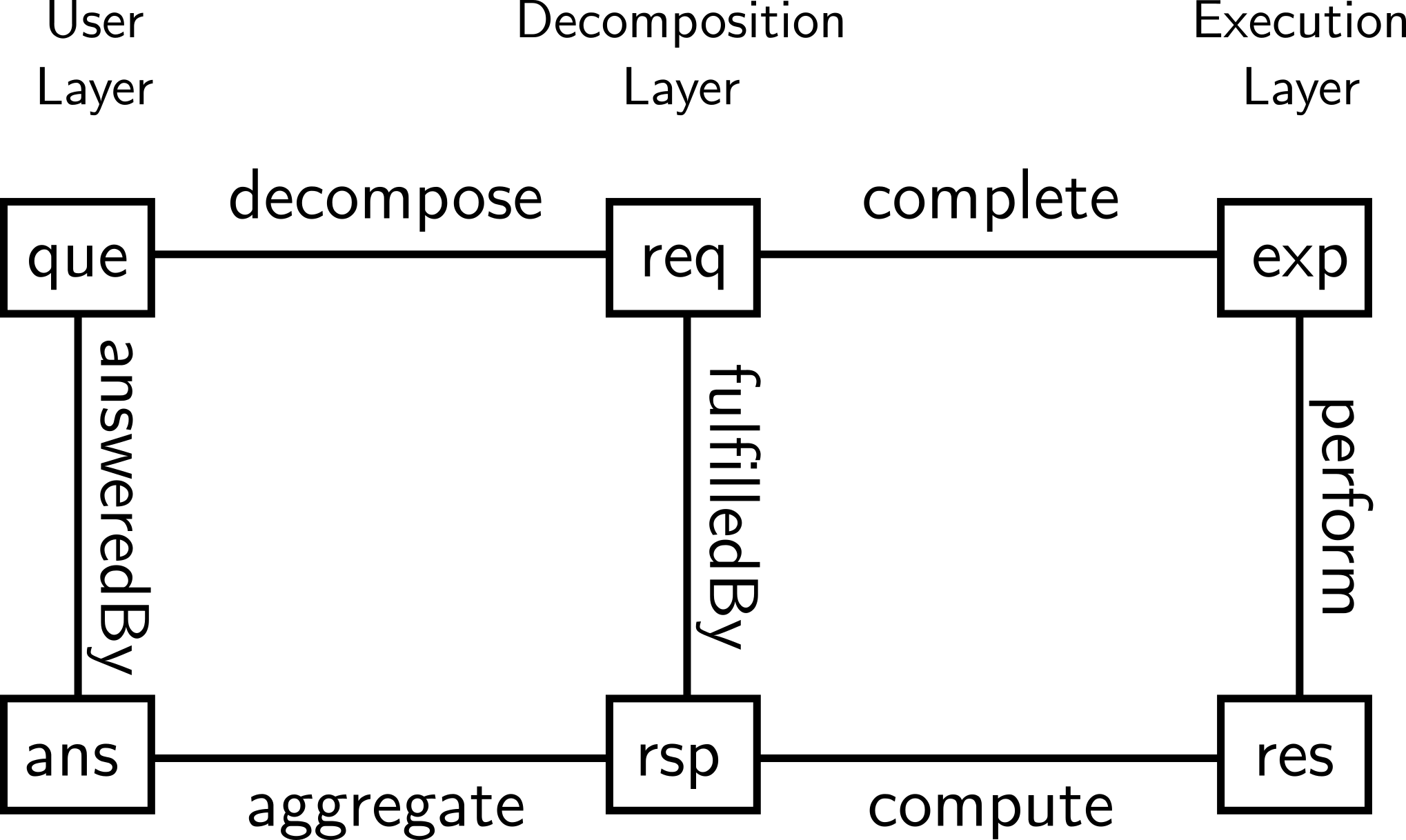}
    \caption{Complete structure.}
    \label{fig:store}
\end{figure}

\subsection{User Layer}
A query language \qlang defines the variables that the user can instantiate, as well as those combinations of variables that define a valid query. Such combinations are called query schemes.
As discussed above, we assume that each query language is specific to one perspective, property and user group.

\begin{definition}[Query Languages]
A query language \qlang is a tuple 
\[
\langle V_\qlang, \mathcal{L}_\qlang, \Delta_\qlang, A_\qlang\rangle
\]
where $V_\qlang \neq \emptyset$ is the set of query variables describing the subject and context of an experiment.
Furthermore, $\mathcal{L} \subseteq 2^{V_\qlang}$ is the set of query schemes, $\Delta_\qlang$ the set of maps from query variables to sets of possible values, essentially providing a typing, and $A_\qlang$ the set of possible values for the answer returned by the user layer.
\end{definition}

The set of query variables is the set of all possible parameters the user can choose. However, it is not necessary for the user to provide all of them, either because not all combinations are meaningful or because some parameters may be deliberately under-specified possibly matching with multiple experiment requests in the decomposition layer. Each meaningful combination of variables is a query scheme. In particular, we point out that some variables determine the properties of interest.

% \lul{ shouldn't the properties of interest also be considered a variable or be another element of the question language tuple? 
% }

\begin{example}\label{ex:trainlang1}
We model a query language for our running example of train braking, specifically from an engineers view, and denote it by the subscript $_{\mathsf{eng}}$. The query posed by the engineer via this language is, "Does the train with a certain mass and braking power, in a certain environmental condition of initial velocity, track friction, and slope, stop within a certain distance when braking force is applied uniformly?"
\[
V_{\mathsf{Train}_{\mathsf{eng}}} = \{m, F_B, v, \mu, \theta, \mathit{dist}\} \qquad \qlang^\mathsf{Train}_{\mathsf{eng}} = 
\langle V_{\mathsf{Train}_{\mathsf{eng}}}, \{V_{\mathsf{Train}_{\mathsf{eng}}}\}, V_{\mathsf{Train}_{\mathsf{eng}}}\!\rightarrow\mathbb{R}, \mathbb{B}\rangle
\]
\end{example}
Here, the engineer may define weight ($\mathit{m}$), braking force ($F_B$), start velocity ($\mathit{v}$), track friction ($\mu$), the slope ($\theta$), and the expected stopping distance ($\mathit{dist}$). Indeed, the query scheme $\{V_{\mathsf{Train}_{\mathsf{eng}}}\}$ says that all variables are required to be defined. Each variable must be assigned a Real number value and the answer will be a Boolean value namely whether the trains stops within the specified distance or not.

\begin{example}\label{ex:trainlang2}
The second example takes the higher-level view of a salesperson, and denote it by the subscript $_{\mathsf{sale}}$. They question whether a certain train (without considering the specification of a train configuration like weight or braking force, but simply a finished train product) is safe according to some regulations (that usually pre-define an environmental situation with a maximum stopping distance).  

%In this view, the user may either want to pre-define a train and a safety level (describing a certain situation which the train should be safe in), or just the train (in which case it tests all possible situations or all safety levels).

\begin{align*}
&\qlang^\mathsf{Train}_{\mathsf{sale}} = \big\langle \{\mathit{train}, \mathit{situation}\},  \{\{\mathit{train}, \mathit{situation}\}, \{\mathit{train}\}\}, \Delta^\mathsf{Train}_{\mathsf{sale}}, \mathbb{B}\big\rangle\\
&\Delta^\mathsf{Train}_{\mathsf{sale}}(\mathit{train}) = \{t_1,t_2,t_3\} \qquad 
\Delta^\mathsf{Train}_{\mathsf{sale}}(\mathit{situation}) = \{\mathsf{lvl}_1,\mathsf{lvl}_2\}
\end{align*}
\end{example}
In this example, there are more abstract parameters than the engineer's query language. The $\mathit{train}$ parameter refers to some kind of train version ($t_1$, $t_2$, or $t_3$) and the $\mathit{situation}$ parameter refers to some predefined situation ($\mathsf{lvl}_1$, $\mathsf{lvl}_2$). The user may choose to not specify the situation, in which case the intended meaning (only specific to this example) is that the chosen train is safe in all predefined situations ($\mathsf{lvl}_1$ and $\mathsf{lvl}_2$).

\begin{definition}[Queries]
Formally, a query is a function from query variables to values, such that the domain of this function is one of the query schemes. Let \qlang be a query language. The set of queries for this language is defined as follows.
\[
\mathcal{Q}_\qlang = \{\quest ~|~  \exists l \in \mathcal{L}_\qlang.~\mathbf{dom}~\quest = l, \forall v \in \mathbf{dom}~\quest.~\quest(v) \in \Delta_\qlang(v) \}
\]
\end{definition}

An answer is merely a value\footnote{It is a value in the sense that it is a member of $\Delta_a$, it can be a complex structure on its own, e.g., a distribution. In such a case, $\Delta_a$ would be the set of possible distributions.}, fitting the set $\Delta_a$.
Queries are related to the answers through the predicate \(\answered(\quest,\answer) \subseteq \mathcal{Q}_\qlang \times A_\qlang\)
% An \emph{answered query} of a language \qlang is a pair
% \((\quest,\answer) \in \mathcal{Q}_\qlang \times A_\qlang\), denoted \(\answered(\quest,\answer)\)

\begin{example}
The following are queries for the query languages in \Cref{ex:trainlang1,ex:trainlang2}, with
$\quest^\mathsf{train}_1, \quest^\mathsf{train}_2 \in \qlang^\mathsf{Train}_{\mathsf{eng}}$ and
$\quest^\mathsf{train}_3, \quest^\mathsf{train}_4 \in \qlang^\mathsf{Train}_{\mathsf{sale}} $
\begin{itemize}
    \item \textit{``Does the train with weight 1500$T$, brake force of 0.12$kN/T$, and initial velocity of 120$km/h$, stop within 1600$m$ on a track with slipperiness $0.7$? and slope of +10 $\degree$''}
    \[
    \quest^\mathsf{train}_1 = \{m \mapsto 1500, F_B\mapsto 0.12, v\mapsto 120, \mathit{slip}\mapsto 0.7, \theta\mapsto 10, \mathit{dist}\mapsto 1600\} 
    \]
    \item \textit{``Does the train with weight 20000$T$, brake force of 0.09$kN/T$, and initial velocity of 200$km/h$, stop within 1000$m$ on a track with slipperiness $0.1$? and slope of -10 $\degree$''}
    \[
    \quest^\mathsf{train}_1 = \{m \mapsto 20000, F_B\mapsto 0.09, v\mapsto 200, \mathit{slip}\mapsto 0.1, \theta\mapsto -10, \mathit{dist}\mapsto 1000\} 
    \]
    \item \textit{``Does train $t_1$ fulfill the safety requirements?''}
    \[
    \quest^\mathsf{train}_3 = \{\mathit{train}\mapsto t_1 \}
    \]
    \item \textit{``Does train $t_1$ fulfill the safety requirements described by scenario $\mathtt{lvl}_1$?''}
    \[
    \quest^\mathsf{train}_4 = \{\mathit{train}\mapsto t_1, \mathit{situation}\mapsto\mathtt{lvl}_1 \}
    \]
\end{itemize}
\end{example}

\subsection{Decomposition Layer}
While the user layer is concerned with the parameters of the queries asked by a user, the \emph{decomposition layer} is concerned with the parameters of the experiments associated with a query and the mapping from query to request. In the extreme case, namely, a query language that corresponds to a very technical, internal view of the experimental set up - these two views may overlap; in general however, we can see the user layer as the interface to the higher-level user view, and the decomposition layer as a lower-level view on experiments.

Formally, the decomposition layer is based on an \emph{experiment request language}, \erlang, which defines experiment requests and their responses.
\begin{definition}[Experiment Request Language]
    An experiment request language \erlang is a tuple
    \[
    \langle V_\erlang, I_\erlang, \Delta_\erlang\rangle
    \]
    Where $V_\erlang$ are the request variables for the experiment request and 
    $I_\erlang$ is the set of property of interest variables. 
    $\Delta_\erlang$ is a map from request variables and property of interest variables to sets of their possible values.
\end{definition}

\begin{example}\label{ex:trainlang3}
The experiment request language for our train running example is the following.
\begin{align*}
\erlang^\mathsf{Train} &= \big\langle \{(m, F_B, v, \mu, \theta, \mathit{dist})\}, \{\mathit{stopDist}\}, \Delta_{\erlang^\mathsf{Train}}  \big\rangle\\
\Delta_{\erlang^\mathsf{Train}}(x) &= \mathbb{R}\text{ $\forall x \in V_\erlang \cup I_\erlang$}
\end{align*}    
\end{example}

\begin{definition}[Requests and Responses]
Let \erlang be a request language.
    The set of experiment requests is defined as
    \[
    \mathcal{R}_\erlang=
    \{
        \request ~|~ \mathbf{dom}~\request = V_\erlang, \forall v \in \mathbf{dom}~\request.~\request(v) \in \Delta_\erlang(v)
    \}
    \]
    The set of experiment responses is defined as
    \[
    \mathcal{D}_\erlang=
    \{
        \response ~|~ \mathbf{dom}~\response = I_\erlang, \forall v \in \mathbf{dom}~\response.~\response(v) \in \Delta_\erlang(v)
    \}
    \]
    We say that a request is fulfilled by a response and denote this relation with $\fulfilled(\request,\response)$. Note that the response corresponding to an experiment request is a value -- to answer the query of the user, all such responses that may be associated with the user query must be aggregated.
\end{definition}

An experiment request does not fully describe an experiment -- it is a \emph{partial experiment specification} that specifies enough of the experiment to describe its relation to the user query. The main task of the decomposition layer is to describe experiments sufficiently precise to decompose queries and compose answers. To formalize these requirements, we employ two functions.

\begin{definition}[Decomposition]
A query decomposes into a set of requests.
\[
\decomp: \mathcal{Q}_\qlang \rightarrow 2^{\mathcal{R}_\erlang}
\]
A set of responses is aggregated into a single answer.
\[
\aggregate: 2^{\mathcal{D}_\erlang}\times\mathcal{Q}_\qlang \rightarrow A_\qlang
\]
\end{definition}

Aggregation is with respect to the original query -- as the same request language can be used for different query languages, this information is needed to, e.g., select the correct property of interest variable. The general condition we pose on fulfillment of decomposition and aggregation is the following.
\begin{definition}[Compatible User and Decomposition Layers]\label{def:compat}
Let \qlang be a query language and \erlang be an experiment request language.
Let \answered, \fulfilled, \decomp, and \aggregate be some answering, fulfillment, decomposition and aggregation function, respectively.
We say that \qlang, \erlang and \answered, \aggregate, \decomp, \fulfilled are \emph{compatible} if the following holds for all $\quest \in \mathcal{Q}_\qlang$.
    \[ \answered\Big(\quest,~\aggregate\big(F, \quest\big)\Big)\]
where $F = \{ \response ~|~\exists \request \in \decomp(\quest).~ \fulfilled(\request,\response)\}$.
\end{definition}
This condition expresses that decomposing a query into a set of requests, generating fulfilling responses for all of them (set $F$), and then aggregating these responses' results in an answer to the original query.

\begin{example}\label{ex:trainlang4}
    Let us continue \Cref{ex:trainlang3} by examining $\qlang^\mathsf{Train}_\mathsf{eng}$, the query language for the engineer's view. We define the decomposition and aggregation relations as follows (The subscript $_0$ signifies the value specified by the user, in the range of $\Delta^{\mathsf{train}}_{eng}$):
\begin{align*}  
\decomp^\mathsf{Train}_\mathsf{eng}&(\{m \mapsto m_0, F_B\mapsto F_{B0}, v\mapsto v_0, \mu\mapsto \mu_0, \theta \mapsto \theta_0, \mathit{dist}\mapsto \mathit{dist}_0 \}) =\\
&\{
\{m \mapsto m_0, F_B\mapsto F_{B0}, v\mapsto v_0, \mu\mapsto \mu_0, \theta \mapsto \theta_0\}
\}
\\
\aggregate^\mathsf{Train}_\mathsf{eng}&(\{\mathit{stopDist} \mapsto s\}) \iff s < \mathit{dist}_0
\end{align*}
\end{example}

In the above decomposition, every query from the engineer is mapped directly to an experiment request; note that we do not include the query variable $dist$ (the expected maximum stopping distance), since it is needed only to compute the response, and not to describe the experiment. 

Consequently, aggregation merely transforms the result value by comparing it with the maximum stopping distance specified by the user, to a Boolean value.

\begin{example}\label{ex:trainlang5}
    Let us now examine $\qlang^\mathsf{Train}_\mathsf{sale}$, the query language for the salesperson's view, in the context of \Cref{ex:trainlang3}, which has a more complex decomposition and aggregation. The subscript $_0$ signifies the value specified by the user, in the range of $\Delta^{\mathsf{train}}_{sale}$.
\begin{align*}  
\decomp^\mathsf{Train}_\mathsf{sale}&(\{\mathit{train} \mapsto t_0 \}) =\\
\{
&\{m \mapsto m(t_0), F_B\mapsto F_B(t_0), v\mapsto v(\mathsf{lvl}_1), \mu\mapsto \mu(\mathsf{lvl}_1), \theta \mapsto \theta(\mathsf{lvl}_1)\},\\
&\{m \mapsto m(t_0), F_B\mapsto F_B(t_0), v\mapsto v(\mathsf{lvl}_2), \mu\mapsto \mu(\mathsf{lvl}_2), \theta \mapsto \theta(\mathsf{lvl}_2)\}
\}
\end{align*}
\begin{align*}
\decomp^\mathsf{Train}_\mathsf{sale}&(\{\mathit{train} \mapsto t_0, \mathit{situation} \mapsto \mathsf{lvl}_0 \}) =\\
\{
&\{m \mapsto m(t_0), F_B\mapsto F_B(t_0), v\mapsto v(\mathsf{lvl}_0), \mu\mapsto \mu(\mathsf{lvl}_0), \theta \mapsto \theta(\mathsf{lvl}_0)\}
\}
\end{align*}
\begin{align*}  
\aggregate^\mathsf{Train}_\mathsf{sale}((\{\mathit{stopDist} \mapsto s_i\})_{i\in I}) \iff \forall i \in I.~s_i < min(dist(\mathsf{lvl}_1), dist(\mathsf{lvl}_2))
\end{align*}

\end{example}
There are two decomposition functions associated with the salesperson's view. In the first one, where the \emph{situation} is under-specified, a query corresponds to multiple experiment requests, in this case two, one for each situation $\mathsf{lvl}_1$ and $\mathsf{lvl}_2$. On the other hand, if the query is fully specified, it corresponds to exactly one request. 

Aggregation now encodes the domain knowledge that returns a Boolean describing whether the train stops within the minimum value of the stopping distance specified by all pre-defined \emph{scenarios}, in case there are multiple corresponding experiment requests (that is, if the \emph{scenario} is under-specified in the user query).

The above examples define the decomposition and aggregation functions. The former operates on queries and requests, while the later operate on answers and responses. To link these two different levels, one must also define the $\answered^\mathsf{Train}_\mathsf{sale}, \fulfilled^\mathsf{Train}_\mathsf{sale}$ relations (for \cref{ex:trainlang5}). These relations track which requests and responses, as well as which queries and answers belong together, to keep different queries and language stack separated. Their exact definition is determined by the storage and the compatability condition of ~\cref{def:compat}, not by reasoning or reuse, and we thus refrain from giving it here.

\subsection{Execution Layer}
The execution layer is concerned with adding enough information to an experiment request to so that it can be performed and is replicable.
As such, its main elements are full experiment specifications. 
We treat experiment specifications and the results of executed experiments abstractly, as their inner workings and structure are not within the scope of our framework since we are agnostic to the precise schema used for experiment specifications.
However, we do observe that experiments rarely come alone and require that given a set of experiments, there is a partial order between them that expresses the order in which they must be performed. This abstracts from details on planning a series of experiments.

\begin{definition}
    An experiment specification language $\exscheme$ is a tuple
    \[
    \langle\experiments, \leq_\exscheme, \mathsf{Res}_\exscheme
    \rangle 
    \]
    where $\experiments$ is the set of possible experiment specifications, $\leq_\exscheme$ is a partial order on $\experiments$, and $\mathsf{Res}_\exscheme$ is the set of experiment results.
\end{definition}

\begin{definition}
Let \exscheme be an experiment specification language. We say that \emph{executing} an experiment with specification $\experiment \in \experiments_\exscheme$ results in $\exresult \in \mathsf{Res}_\exscheme$. This is denoted as $\execute(\experiment,\exresult)$.
\end{definition}

\begin{example}\label{ex:trainlang6}
Continuing \Cref{ex:trainlang4,ex:trainlang5}, the results for the train experiments are functions $\mathbb{R}^+ \rightarrow \mathbb{R}$ that represent the trace of displacement of the train as a function of time. A sample trace is presented in \Cref{fig:train-trace}(a).
\end{example}

Given a set of experiment requests, the experiments are not just mapped to experiment specifications, but to a `partially ordered set of experiment specifications', where the order expresses the plan of executing the experiments. This can be a low-level form of reuse (e.g. reuse of resources for subsequent experiments), but not on the level of data or reasoning, as is the focus of this paper. To connect the execution layer to the decomposition layer, we formalize the `completion' into an experiment specification, of an experiment request, as follows. 

\begin{definition}[Completion]
A request is `completed' into a set of experiment specifications.
Let $\request \in \mathcal{R}_\erlang$ be an experiment request. We say that $\experiment \in \experiments_\exscheme$ completes \request, written $\complete(\request,\experiment)$, if \request uniquely determines \experiment.
\[
\complete: \mathcal{R}_\erlang \rightarrow 2^{E_\exscheme}
\]
    
Let \erlang be an experiment request language and $\mathsf{Res}$ a set of experiment results.
A computation function is a function from results to responses.
\[
\compute: 2^{\mathsf{Res}_\exscheme} \times \mathcal{R}_\erlang \rightarrow \mathcal{D}_\erlang 
\]
\end{definition}

Experiments are assumed to be deterministic.
As $\leq_\exscheme$ is defined over all experiment specifications, every subset of experiment specifications is also partially ordered:
Applying $\complete$ to a set of requests suffices to this end. It is also important to note that the $\compute$ function depends on the property-of-interest variable ($\in I_\erlang$) specified in the \request. The \compute function can be invoked over multiple experiment results for example, when aggregating results from multiple trials of the same experiment.

\begin{example}\label{ex:trainlang7}
In the train running example, an experiment response (the stopping distance of the train) can be computed from the time series trace (experiment result) with 
\[
\compute_\mathsf{Train}(\exresult) = \{ \mathit{stopDist} \mapsto max(\exresult(x)) \} 
\]
\end{example}

The \compute function above simply assigns the maximum value of the displacement trace of the train, $x$, to the stopping distance $stopDist$ property of interest.

We note that if an application based on our framework needs to access specific properties associated with experiment specifications, this can be easily done by defining abstract functions over $\experiments$. For example, validity frames (defined as a set of experiment specifications where a model is deemed a valid representation of the system-under-study w.r.t specific properties of interest \cite{mittal23}) can be expressed as a function that retrieves a validity frame, and a boolean predicate whether a validity frame subsumes another. Let $\mathsf{Valid}$ be the set of validity frames w.r.t.\ some experiment specification language \exscheme.
\begin{align*}
    \mathsf{validity\textrm{--}frame}&: \experiments \rightarrow \mathsf{Valid}\\
    \mathsf{validity\textrm{--}subsume} &\subseteq \mathsf{Valid} \times \mathsf{Valid}
\end{align*}

Next, we impose the essential conditions of compatibility of composition and execution layers as we did for user and decomposition layers.
\begin{definition}[Compatible Decomposition and Execution Layers]
Let \erlang be an experiment request language and \exscheme an experiment specification language
Let \compute, \fulfilled, \execute, and \complete be some computing, fulfillment, execution and completion function, respectively.
We say that \erlang, \exscheme and \compute, \fulfilled, \execute, \complete are compatible if the following holds for all $\request \in \mathcal{R}_\erlang$:
    \[
    \fulfilled(\request, \compute(\execute(\complete(\request))))
    \]
\end{definition}

As expected, the compatibility notions can be checked in isolation and combined to the expected property of the overall languages:
Performing all experiments completed from the decomposition of a query results in a set of results that can be used to compute a set of responses that aggregate to an answer to the original query. Note that these languages may be the same -- for example, the requests may be (less uncertain) queries.
\begin{definition}
    A language structure $\mathit{LS}$ is a tuple of a query language, an experiment request language, an experiment specification language, 
    equipped with fitting \compute, \fulfilled, \execute, \complete, \aggregate, and \decomp functions. 
    \[ \mathit{LS} = \langle \qlang,\erlang,\exscheme \rangle \]
    We say that a language structure is compatible if the following holds for every $\quest \in \mathcal{Q}_\qlang$.
    \begin{align*}
    &exp = \left\{ \response ~\mid~ \exists \request \in \decomp(\quest).~\response = \compute(\request, \execute(\complete(\request))) \right\}\\
    &\forall \quest.~\answered\left(\quest, \aggregate\left(\quest, exp\right)\right)
    \end{align*}    
\end{definition}

\begin{lemma}
Let $\mathit{LS} = \langle \qlang,\erlang,\exscheme \rangle$ be a (tripartite) language structure. $\mathit{LS}$ is compatible if (1) $\qlang$ and $\erlang$ are compatible and (2) $\erlang$ and $\exscheme$ are compatible.
\end{lemma}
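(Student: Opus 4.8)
The plan is to fix an arbitrary query $\quest \in \mathcal{Q}_\qlang$ and establish the single required conjunct $\answered\big(\quest, \aggregate(\quest, exp)\big)$, where $exp = \{\response \mid \exists \request \in \decomp(\quest).~\response = \compute(\request, \execute(\complete(\request)))\}$. Both hypotheses are universally quantified — (2) over all requests, (1) over all queries — so the proof is essentially a matter of instantiating them and then bridging the two layers by identifying the \emph{computed} response set $exp$ with the \emph{fulfilled} response set $F = \{\response \mid \exists \request \in \decomp(\quest).~\fulfilled(\request, \response)\}$ that appears in the user/decomposition compatibility condition of \Cref{def:compat}.

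First I would invoke hypothesis (2), the compatibility of $\erlang$ and $\exscheme$: for every $\request \in \decomp(\quest)$ it asserts $\fulfilled(\request, \compute(\execute(\complete(\request))))$. Writing $\response_{\request} := \compute(\request, \execute(\complete(\request)))$ for the response computed from $\request$, this says that each computed response fulfils its originating request. Consequently every element of $exp$ is witnessed inside $F$, giving the inclusion $exp \subseteq F$.

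The crux is the reverse inclusion $F \subseteq exp$, and this is where the determinism assumptions must carry the argument. Since experiments are assumed deterministic, $\complete$, $\execute$ and $\compute$ compose to a genuine function on requests, so $\response_{\request}$ is uniquely determined by $\request$; and since each request is fulfilled by \emph{exactly} one response (the functionality of $\fulfilled$ asserted when requests and responses are introduced), the unique $\response$ satisfying $\fulfilled(\request, \response)$ must coincide with $\response_{\request}$. Hence any $\response \in F$ equals $\response_{\request}$ for its witnessing request, yielding $F \subseteq exp$ and therefore $F = exp$. I expect this identification to be the main obstacle: were $\fulfilled$ a genuine relation rather than a partial function, one would obtain only $exp \subseteq F$, and closing the gap would additionally require $\aggregate$ to be insensitive to enlarging its response set — a property not assumed anywhere. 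Making the reliance on the ``exactly one response'' convention explicit is therefore the key step to get right.

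Finally, with $F = exp$ in hand, I would apply hypothesis (1), the compatibility of $\qlang$ and $\erlang$, which delivers $\answered\big(\quest, \aggregate(F, \quest)\big)$ immediately; rewriting $F$ as $exp$ gives the goal $\answered\big(\quest, \aggregate(\quest, exp)\big)$. As $\quest$ was arbitrary, the universally quantified conclusion in the definition of a compatible language structure follows. One minor bookkeeping point I would settle up front is the notational drift in the paper — $\aggregate$ is written with its arguments in both orders and $\compute$ sometimes elides the request argument — which I would simply read as the same functions applied to the same data throughout.
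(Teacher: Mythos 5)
Your proof is correct and takes essentially the same route as the paper's: use compatibility (1) to reduce the goal to the equality of the computed response set $exp$ with the fulfilled response set $F$, then derive that equality from compatibility (2). You are in fact slightly more careful than the paper, which declares the biconditional $\response = \compute(\request, \execute(\complete(\request))) \iff \fulfilled(\request,\response)$ to be ``exact compatibility (2)'' even though (2) literally yields only the forward inclusion $exp \subseteq F$; your explicit appeal to the ``each request is fulfilled by exactly one response'' convention is precisely what justifies the reverse inclusion that the paper leaves implicit.
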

\begin{proof}
    We have to show that $\forall \quest.~\answered\left(\quest, \aggregate\left(\quest, X\right)\right)$, where 
    \[X = \left\{ \response ~\mid~ \exists \request \in \decomp(\quest).~\response = \compute(\request, \execute(\complete(\request))) \right\}.\]
    By compatibility (1), we have that this holds exactly if the following holds.
    \[X =  \{\response ~|~\exists \request \in \decomp(\quest).~ \fulfilled(\request,\response)\}\]
    Thus, we have to show equality of these sets, i.e., that 
    \[
    \response = \compute(\request, \execute(\complete(\request))) \iff \fulfilled(\request,\response)
    \]
    This is exact compatibility (2). \qedhere
\end{proof}

\noindent We only consider compatible language structures henceforth.
\section{Reasoning over Experiments}\label{sec:reasoning}
Armed with our formalization of experiment structures, we now formulate several kinds of reuse. Our framework utilizes \emph{intelligent} and \emph{robust} caches for all three layers in the tripartite experiment structure.
Caching enables simple retrieval and reuse of a returned value (resp.\ answer, response, or result) if the same input (resp.\ query, request or specification) was given to that specific layer. 
Already here, the layered architecture enables additional reuse: If a query is decomposed into several requests, it may be the case that \emph{some} of the requests are obtained directly from the cache while others are actually executed.

\begin{description}
    \item[Intelligent Caching] With intelligent caching, each layer is equipped with \emph{Case-Based Reasoning} (CBR) capabilities that attempt to derive, for example on the decomposition layer, a response from a previously encountered, but different, request without communicating with the lower-level layer, i.e., without passing an experiment specification to the execution layer in the example.
    \item[Robust Caching] With robust caching, each layer is equipped with two distances, which realize \emph{robust retrieval}. The first distance defines when, for example, for the decomposition layer, a request is considered similar enough to a previous request that the response can be reused directly. The second distance defines when a request is considered similar enough to a previous request that the response cannot be reused directly, but \emph{computed} from the result of the same underlying data.
\end{description}

We introduce reuse through intelligent caching in three steps. \Cref{ssec:store} introduces an abstract notion of an experiment store, \Cref{ssec:reason} introduces the formalisms for CBR and distances, and describes the retrieval steps, and \Cref{ssec:algo} gives the full algorithm and shows its correctness.

\paragraph*{Preliminaries}

Our reuse framework is based on `distances' over experiment-related structures. A distance over a set $X$ is defined as $\dist{\cdot,\cdot}:X\times X \rightarrow \mathbb{R}^+$ such that the following condition holds:
$\forall x.y \in X.~\dist{x,x} = 0\wedge
    \dist{x,y} = \dist{y,x}\wedge
    \dist{x,z} \leq \dist{x,y} + \dist{y,z}
$.   
A partial order $\leq$ over a set $X$ is a reflexive $(x \leq x)$, transitive $(x \leq y \wedge y \leq z \rightarrow x \leq z)$ and anti-symmetric ($x \leq y \wedge y \leq x \rightarrow x = y$) relation.

\subsection{Experiment Storage and Reasoning Structures}\label{ssec:store}

Reuse of experiments requires the storage and retrieval of previously conducted experiments, the queries and requests that led to them, as well as the results, responses and answers generated by them.
To this end, we define an abstract experiment store, while not committing to any specific data storage and processing pipeline or technology.

It is critical that the experiment store is consistent -- while a query is processed, and new requests, results, etc.\ are added, the store is updated; reuse depends on correct recording of the relations between the experiment structures described in the previous section. The reuse algorithm we provide in \Cref{ssec:algo} indeed states that reuse returns the right answer on the condition that the experiment store is consistent (cf.~\Cref{fig:store}).

\begin{definition}[Experiment Store]
    An \emph{experiment store} $S$ for a compatible language structure $LS = \langle \qlang,\erlang,\exscheme \rangle$ is a tuple 
    \[
    S_{LS} = \langle H,F,P,C\rangle_{LS}
    \]
    where $H \subseteq \mathcal{Q}_\qlang \times A_\qlang$ is the set of queries and their answers, 
          $F \subseteq \mathcal{R}_\erlang \times \mathcal{D}_\erlang$ is the set of requests and their responses,  
          $P \subseteq \mathcal{E}_\exscheme \times \mathsf{Res}_\exscheme$ is the set of executed experiment specifications and their result.
          Finally, $C_{LS}$ stores the information that connects the layers with the following signature:
          \begin{align*}
           C \subseteq \quad&\mathcal{Q}_\qlang \times  2^{\mathcal{R}_\erlang} \mbox{(the \decomp relations)}\\
           \cup~& \mathcal{R}_\erlang \times \mathcal{Q}_\qlang \times 2^{\mathcal{E}_\exscheme} \mbox{(the \complete relations)}\\
           \cup~& \mathsf{Res}_\exscheme\times \mathcal{R}_\erlang\times \mathcal{D}_\erlang \mbox{(the \compute relations)}\\
           \cup~& 2^{\mathcal{D}_\erlang} \times A_\qlang \mbox{(the \aggregate relations)}
          \end{align*}
\end{definition}
Note that the storage of the \complete relations also includes a reference to the query corresponding to the experiment request being completed. 

%For readability, we omit the language structure in examples whenever it is understood from the context.

An experiment store can contain arbitrary information, but we are interested in retaining the connections laid out in the previous section.
We use some access functions as shortcuts:
\begin{itemize}
    \item Function $\mathtt{getAnswer}(S_{LS},\quest)$ returns either the unique $\answer$ (in the sense that there is only one \answer for any \quest) such that $(\answer,\quest) \in H$ or $\bot$ if no such answer exists.
    \item Function $\mathtt{getResponses}(S_{LS},\quest)$ returns either the unique set $R$ such that $(\answer,R) \in C$ or $\bot$ if no such set exists.
    \item Function $\mathtt{add}_\mathcal{Q}(S_{LS},\quest,\answer)$ returns a store with the query-answer pair $(\quest, \answer)$ added:
    \(S'_{LS} =  \langle H \cup \{(\quest, \answer)\},F,P,C\rangle_{LS}\)
\end{itemize}

\begin{definition}[Consistency]\label{def:consistent}
Let $LS = \langle \qlang,\erlang,\exscheme \rangle$ be a language structure and $S_{LS} = \langle H,F,P,C\rangle_{LS}$ an experiment store.
We say that $S_{LS}$ is consistent if the following 4 conditions hold.
\begin{align*}
    (i)&H \subseteq \answered \\
    (ii)&F \subseteq \fulfilled \\
    (iii)&P \subseteq \execute \\
    (iv)&C \subseteq \decomp \cup \aggregate \cup \compute \cup \complete
\end{align*}
\end{definition}
An experiment store is \emph{consistent} if it contains only correct connections.
Beyond the language structure, we also define a distance $\dist{\cdot,\cdot}_{A_\qlang}$ on answers and formulate conditions on how sensitive our caching is.

The overall task of our reuse framework thus becomes: given a consistent experiment store $S_{LS}$, and a query $\quest \in \qlang$ to compute an answer $\answer$ such that the functions $\execute$, $\compute$ and $\aggregate$ are invoked as little as possible. As a side-effect, the store $S_{LS}$ will also be updated. 

Formally, we compute $\mathsf{answer}(S_{LS}, \quest) = \langle S'_{LS}, \answer\rangle$ and demand that, for some given \emph{tolerance threshold} $t\in \mathbb{R}^+$:
\begin{enumerate}
    \item $S'_{LS}$ is consistent
    \item $\exists \answer'.~\answered(\quest,\answer') \wedge \dist{\answer,\answer'} \leq t$ -- which expresses that the given answer may be off by $t$, and exact retrieval is the special case of $t=0$.
\end{enumerate}

\subsection{Reuse Mechanisms}\label{ssec:reason}
We formalize four kinds of reuse, of which three are performed on each layer. We briefly illustrate them for when the user layer is given a query \quest, before we give the full definitions.
\begin{description}
    \item[Direct Reuse] If the same question \quest has already been answered with \answer, we merely return the same \answer as before.
    \item[Symbolic Reasoning] If we can deduce an answer \answer for \quest  from the answer of a previous different query $\quest'$ with answer $\answer'$ without invoking \decomp.
    Consider the train running example -- if the train does not stop with a certain configuration within $x$ meters, then it will also surely not stop within $y < x$ meters, in the same conditions. Thus, it is not necessary to perform any experiment for queries with the same configuration and $y < x$. 
    \item[Fuzzy Retrieval] This relies on some distance measure $\dist{\cdot,\cdot}_1$ over queries and a tolerance threshold $t_1$. 
    If there was a previous query $\quest'$ with answer \answer and $\dist{\quest,\quest'}_1 \leq t_1$, then we consider the queries similar enough, and that the answer \answer can be used as an answer for $\quest$ with tolerance $t_1$. For example, if the experiment for the train running example has a minimum sensor resolution of $1$ meter, then two queries only differing by $<1$ meter in the stopping distance will have the same answer. This is essentially a similarity search~\cite{DBLP:series/ads/ZezulaADB06} in the retrieval step for case-based reasoning~\cite{cbr}, with the requirement that the reuse step bounds the error (using a tolerance threshold $t_1$ in our case which is not necessarily the minimum resolution of the observations) as well. Note that this is a very simplistic example, and for different quantities, different tolerance thresholds can and most likely should be set.
    \item[Fuzzy Recomputation] This relies on a different distance measure $\dist{\quest,\quest'}_2$ and corresponding tolerance threshold $t_2$.
    If there was a previous query $\quest'$ with answer $\answer'$ computed from a set of responses $R$ (i.e., $\answer' = \aggregate(\quest',R)$ and $\dist{\quest,\quest'}_2 < t_2$, then we consider the query as so similar, that the answer \answer for \quest can also be computed from $R$, i.e., $\answer = \aggregate(\quest,R)$. This is the case if the queries, e.g., differ in the variable of interest. As this kind of reuse depends on the presence of a lower level, it cannot be implemented in the execution layer, as it is the lowest one.
\end{description}

\begin{figure}
\includegraphics[width=\textwidth]{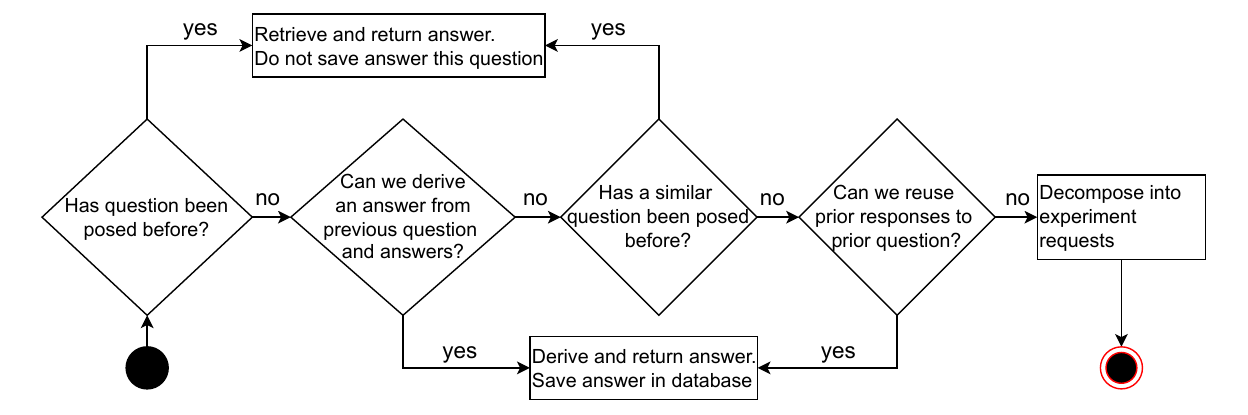}
    \caption{Proposed reuse mechanisms for the user layer in a flow-chart.}
    \label{fig:diamonds}
\end{figure}

\Cref{fig:diamonds} presents a flow-chart showing the composition of these mechanisms in a unified workflow. Direct reuse is executed first, as it is the least computationally heavy, while recomputation is performed last.
Symbolic reasoning and fuzzy retrieval, which are independent of each other and whose computational requirements depend heavily on the technologies used for implementation, can be executed in parallel. While \Cref{fig:diamonds} describes re-use in the user layer, the same mechanisms can be used for the composition and execution layer, with the sole difference that fuzzy recomputation is not possible on the execution layer  -- experiment results are not derived from lower layers and can thus not be recomputed.

%\paragraph{User Layer}
%Reasoning structures extend language structures with the necessary auxiliary notion to enable each of the listed reuse mechanisms.

\begin{definition}
Let \qlang be a query language. A query reasoning scheme $R_\qlang$ is a tuple
\[
R_\qlang = \left\langle\dist{\cdot,\cdot}_\qlang^{\mathsf{get}}, \dist{\cdot,\cdot}_\qlang^{\mathsf{comp}}, t_\qlang^{\mathsf{get}},t_\qlang^{\mathsf{comp}}\right\rangle
\]
where $\dist{\cdot,\cdot}_\qlang^{\mathsf{get}}, \dist{\cdot,\cdot}_\qlang^{\mathsf{comp}}$ are distances over queries $\mathcal{Q}_\qlang$ and $t_\qlang^{\mathsf{get}},t_\qlang^{\mathsf{comp}} \in \mathbb{R}^+$ are two positive real numbers.
\end{definition}

Distance $\dist{\cdot,\cdot}_\qlang^{\mathsf{get}}$ and threshold $t_\qlang^{\mathsf{get}}$ are used for fuzzy retrieval, while $\dist{\cdot,\cdot}_\qlang^{\mathsf{comp}}$ and threshold $t_\qlang^{\mathsf{comp}}$ are used for fuzzy recomputation. 

To enable fuzzy retrieval and express the second condition for consistency of a language structure with the experiment store, we must be able to bound the error of the answer, based on the distance of the new query to an old query.

\begin{definition}
We say that two queries $\quest_1,\quest_2$ with  $\dist{\quest_1,\quest_2}_\qlang^{\mathsf{get}} < t_\qlang^{\mathsf{get}}$ are $t$-insensitive if the following holds.
\[ \forall \answer_1,\answer_2.~\answered(\quest_1,\answer_1) \wedge \answered(\quest_2,\answer_2) \rightarrow \dist{\answer_1,\answer_2}_\mathsf{A_\qlang} < t
\]

\end{definition}
Intuitively, two queries are insensitive if, under the condition that they have a distance of $t_\qlang^{\mathsf{get}}$, the distance between their responses is bounded by $t$. This means that, if we can accept that the response is slightly off (by $t$), we can reuse answers that are performed for different requests.
We say that \qlang is $t_\qlang^{\mathsf{get}},t$-insensitive if all its queries are pairwise $t_\qlang^{\mathsf{get}},t$-insensitive.

For recomputations, we can have a similar condition. 
\begin{definition}\label{def:aggregate}
Let $\quest_1, \quest_2$ be two queries with $\dist{\quest_1, \quest_2}_\qlang^{\mathsf{comp}} < t_\qlang^{\mathsf{comp}}$.
We say that a function \aggregate is $t$-insensitive, if for any set of responses $R$ such that $\answered(\quest,\aggregate(R, \quest))$, the following holds.
\[
    \dist{\aggregate(R, \quest_1),\aggregate(R, \quest_2)}_\mathsf{A_\qlang} < t
\]
\end{definition}

Similarly,  query languages are equipped with an auxiliary function for reuse.
 \begin{definition}\label{def:justify}
    The function $\mathsf{justify}_\qlang: \mathcal{Q}_\qlang  \times \Delta^a_\qlang \times \mathcal{Q}_\qlang \rightharpoonup \Delta^a_\qlang$ takes as input a query $q_1$, an answer $a$ for it, and a second query $q_2$. It computes an answer $a_2$ for $q_2$, i.e., the following holds.
    \[
    \big(\mathsf{justify}_\qlang(q_1,a_1,q_2) = a_2 \wedge \answered(q_1,a_1)\big) \Rightarrow \answered(q_2,a_2)
    \]
    Note that the function is partial.
 \end{definition}

 Let us return to the running example to illustrate symbolic reasoning.

\begin{example}
We know that if a train is safe in a certain situation, then it will also be safe in a less dire situation (lower initial velocity, and/or higher mass, braking force, or friction coefficient, see proportionalities in \Cref{sec:example}). Similarly, we know that if a train is not safe in a given situation, then it will also not be safe in a more dire situation. Formally, we define this as follows:
\begin{align*}
    &\mathsf{justify}_{\qlang^\mathsf{Train}_1}(q_1, \mathtt{true}, q_2) = \mathtt{true} \\
    \iff &(q_1(\mathit{v}) \geq q_2(\mathit{v})) \wedge (q_1(\mathit{m}) \leq q_2(\mathit{m})) \wedge 
    (q_1(\mathit{F_B}) \leq q_2(\mathit{F_B})) \\
    &\wedge (q_1(\mu) \leq q_2(\mu)) \wedge (q_1(\theta) \leq q_2(\theta))\\ 
    %\wedge (\forall x. x \neq \mathit{v} \rightarrow q_1(\mathit{x}) = q_2(\mathit{x}))\\
    \vspace{8pt} \\ % space to make the 2 cases more visible and easier to parse
    &\mathsf{justify}_{\qlang^\mathsf{Train}_1}(q_1, \mathtt{false}, q_2) = \mathtt{false} \\
    \iff &(q_1(\mathit{v}) \leq q_2(\mathit{v})) \wedge (q_1(\mathit{m}) \geq q_2(\mathit{m})) \wedge 
    (q_1(\mathit{F_B}) \geq q_2(\mathit{F_B})) \\
    &\wedge (q_1(\mu) \geq q_2(\mu)) \wedge (q_1(\theta) \geq q_2(\theta))
    %\wedge (\forall x. x \neq \mathit{v} \rightarrow q_1(\mathit{x}) = q_2(\mathit{x}))
\end{align*}

%\rakm{I removed the following part, I found it of no use.}
\begin{comment}
    The answer $A_1$ to $q^\mathsf{train}_1$ is obviously \(
    A_1 = \mathtt{true}
    \). 
    With the auxiliary function we can reuse the result for $q^\mathsf{train}_2$.
    \[
    \mathsf{justify}_{\qlang^\mathsf{Train}_1}(q^\mathsf{train}_1, \mathtt{true}, q^\mathsf{train}_2) = \mathtt{true}
    \]
    However, if the result were different, no reuse would be possible.
    \[
    \mathsf{justify}_{\qlang^\mathsf{Train}_1}(q^\mathsf{train}_1, \mathtt{false}, q^\mathsf{train}_2) = \bot
    \]
\end{comment}
\end{example}

\begin{figure}
\begin{lstlisting}[language=pseudo, caption={Pseudo-code algorithm for reuse on the user layer.}, label={fig:algo-user}]
$\text{\textbf{function} }$reuse-user
$\text{\textbf{Input}:~~A query}$ $\quest\text{ and a store}$ store
$\text{\textbf{Output}: \!\!\!A pair of a store and either an answer, or } \bot$

//implements direct reuse
if($\mathsf{getAnswer}($store$,\quest) = \answer$) return $\langle $store$, \answer\rangle$

//implements symbolic reasoning
if($\exists \quest'.\mathsf{getAnswer}($store$,\quest') = \answer' \wedge \justify_Q(\quest,\quest',\answer') = \answer$)
  return $\langle \mathsf{add}_Q($store$,\quest,\answer), \answer\rangle$ $\label{l:cons1}$
end

//implements fuzzy retrieval
if($\exists \quest'.\mathsf{getAnswer}($store$,\quest') = \answer \wedge |\quest,\quest'|^\mathsf{get}_\qlang < t^\mathsf{get}_\qlang$)
  return $\langle$store$, \answer\rangle$ $\label{l:cons3}$
end

//implements fuzzy recomputation
if($\exists \quest'.\mathsf{getResponses}($store$,\quest') \neq \bot \wedge~|\quest,\quest'|^\mathsf{comp}_\qlang < t^\mathsf{comp}_\qlang$)
  ans := $\aggregate(\mathsf{getResponses}($store$,\quest'),\quest)$
  return $\langle\mathsf{add}_Q($store$,\quest,$ans), ans$\rangle$ $\label{l:cons2}$
end
return $\langle$store$, \bot\rangle$
\end{lstlisting}
\end{figure}

Note that the above formulation also subsumes the case where $q_1 = q_2$, i.e. direct reuse. Next, we examine how insensitivity, compatibility and reasoning structures enable reuse on the user layer.

\Cref{fig:algo-user} shows the reuse algorithm in pseudo-code. 
It takes as input a query \quest and a store $\mathtt{store}$, and outputs either an updated store and an answer, or the symbol $\bot$ that denotes that no reuse is possible.  
The first step (line 6) is direct reuse, where the store is not updated (as it already contains the query-answer pair). The second step (line 9) is symbolic reasoning. It checks whether a query $\quest'$ exists, with an answer $\answer'$ that can justify an answer for $\quest$. This does update the store, as the new answer was not stored yet. The third step (line 14) is fuzzy retrieval.
Here, the store is not updated: the answer \answer is originally for $\quest'$, adding it as an answer for $\quest$ would increase the distance where it can be reused beyond the original semantics.
Finally, the last step (line 19) is fuzzy recomputation. Again, the results are stored, as they are computed specifically for the new query.

We can phrase a correctness notion for this reuse on the user layer.
\begin{lemma}\label{lem:user}
    Let $\quest$ be a query, $\mathsf{store}_{LS}$ a consistent store for a $t$-insensitive language structure, and $\langle \mathsf{store}',\answer \rangle = \mathtt{reuse-user}(\mathsf{store}, \quest)$, where $\mathtt{reuse-user}$ is the function described in \Cref{fig:algo-user}. 
    (1) The resulting store $\mathsf{store}'$ is consistent, i.e., \(\mathsf{consistent}(\mathsf{store'})\)
    and (2) the returned answer is correct with sensitivity $t$.
    $\exists \answer'.~\answered(\quest,\answer') \wedge \dist{\answer,\answer'} \leq t$.
\begin{proof}
Let us first examine (1). First, we observe that only the $H$ component of the store is ever modified, so it suffices that we preserve condition (i) of \cref{def:consistent}. There are two modifications. In \cref{l:cons1}, the new answer $\answer$ is computed from an existing answer $\answer'$ and a corresponding, prior query. By the definition of $\justify$ (\cref{def:justify}), this preserves the condition on \answered. In \cref{l:cons2}, it is established for the retrieved answer by compatibility of the language structure. Note that $\mathtt{getResponses}$ must be from a prior decomposition, thus fulfilling the condition on $F$ in \cref{def:consistent}.

As for (2), the relevant retrieval is in \cref{l:cons3}, where the return answer is correct due to the assumed insensitivity of \aggregate (\cref{def:aggregate}).\qedhere
\end{proof}
\end{lemma}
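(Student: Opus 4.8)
The plan is to do a case analysis over the four guarded branches of $\mathtt{reuse\textrm{-}user}$ (together with the final fall-through that returns $\bot$), verifying claims (1) and (2) in each. Two structural observations organize the whole argument. First, the store is mutated only through $\mathsf{add}_Q$, which touches the $H$ component alone; since $F$, $P$ and $C$ are left untouched, conditions (ii)--(iv) of \Cref{def:consistent} are preserved automatically, so for (1) it suffices to re-establish condition (i), $H \subseteq \answered$, after each addition. Second, the direct-reuse, symbolic-reasoning and fuzzy-recomputation branches all return an $\answer$ for which $\answered(\quest,\answer)$ holds \emph{exactly}, so they satisfy (2) with $\answer' = \answer$ and distance $0 \le t$; only fuzzy retrieval introduces genuine tolerance and needs the insensitivity hypothesis.

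First I would dispatch the three ``exact'' branches. Direct reuse returns the stored $\answer$ with $\mathsf{getAnswer}(\mathsf{store},\quest)=\answer$; consistency of the input store gives $(\quest,\answer)\in H\subseteq\answered$, and no addition is performed. For symbolic reasoning (\cref{l:cons1}) the witness $\quest'$ satisfies $\answered(\quest',\answer')$ (again by $H\subseteq\answered$), and $\justify$ is defined (\Cref{def:justify}) precisely so that $\answered(\quest',\answer')$ together with the computed value entails $\answered(\quest,\answer)$; hence the added pair preserves $H\subseteq\answered$. For fuzzy recomputation (\cref{l:cons2}) the set $R=\mathsf{getResponses}(\mathsf{store},\quest')$ is, by consistency of $C$, a response set recorded for a previous decomposition, so each of its elements is a fulfilling response; compatibility of the language structure (\Cref{def:compat}) then yields $\answered\big(\quest,\aggregate(R,\quest)\big)$, so the added pair is again sound. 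In all three cases (2) holds with distance $0$, and note that because the recomputation re-aggregates \emph{with respect to} $\quest$, the answer is exact and the distance bound of \Cref{def:aggregate} need not be invoked here.

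The heart of (2) is the fuzzy-retrieval branch (\cref{l:cons3}). No pair is added, so (1) is immediate. The returned $\answer$ satisfies $\answered(\quest',\answer)$ for some $\quest'$ with $\dist{\quest,\quest'}^{\mathsf{get}}_\qlang < t^{\mathsf{get}}_\qlang$. I would then take any true answer $\answer'$ of the posed query, i.e.\ $\answered(\quest,\answer')$, and invoke the assumed $t^{\mathsf{get}}_\qlang,t$-insensitivity of $\qlang$: since the two queries lie within $t^{\mathsf{get}}_\qlang$, their answers lie within $t$, so $\dist{\answer,\answer'}_{A_\qlang} < t \le t$, which is exactly (2).

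I expect the one genuinely delicate point to be the consistency of the fuzzy-recomputation branch. Compatibility (\Cref{def:compat}) delivers $\answered\big(\quest,\aggregate(F,\quest)\big)$ only when $F$ is the set of fulfilling responses of $\quest$'s \emph{own} decomposition, whereas the algorithm re-aggregates the responses $R$ recorded for the \emph{different} query $\quest'$. To make the compatibility appeal rigorous one must know that $R$ is also an admissible decomposition-response set for $\quest$ --- informally, that $\dist{\quest,\quest'}^{\mathsf{comp}}_\qlang < t^{\mathsf{comp}}_\qlang$ forces $\quest$ and $\quest'$ to share the same underlying experiments and to differ only in the aggregation / property of interest. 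I would therefore surface this as an explicit precondition on $\dist{\cdot,\cdot}^{\mathsf{comp}}_\qlang$ (or fold it into what a ``compatible, $t$-insensitive language structure'' is taken to mean) before closing that branch. The $\bot$ fall-through is then trivial, since the store is returned unchanged and part (2) is read as conditional on a non-$\bot$ answer being produced.
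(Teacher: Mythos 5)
Your proof is correct and follows essentially the same route as the paper's: only the $H$ component of the store is ever touched, the symbolic branch is discharged via \Cref{def:justify}, the recomputation branch via compatibility, and fuzzy retrieval via insensitivity. Two of your refinements are worth noting because they are sharper than the paper's own terse argument. First, for the retrieval branch (\cref{l:cons3}) you invoke the $t^{\mathsf{get}}_\qlang,t$-insensitivity of \emph{queries}, which is the notion that actually applies there; the paper instead cites the \aggregate-insensitivity of \Cref{def:aggregate}, which governs the $\mathsf{comp}$ distance, so your reading is the more accurate one. Second, the ``delicate point'' you flag in the recomputation branch is a genuine looseness that the paper papers over with the remark that $\mathtt{getResponses}$ stems from a prior decomposition: \Cref{def:compat} yields $\answered\big(\quest,\aggregate(R,\quest)\big)$ only when $R$ is the fulfilling response set of $\quest$'s \emph{own} decomposition, whereas the stored $R$ belongs to the different query $\quest'$, so one does need an additional assumption tying $\dist{\cdot,\cdot}^{\mathsf{comp}}_\qlang$ to sameness of the underlying decomposition (or an explicit appeal to \Cref{def:aggregate} together with a link between the answers of $\quest$ and $\quest'$) to close that case; making this a stated precondition, as you propose, is the right fix.
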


\subsection{Reuse Algorithm and Soundness}\label{ssec:algo}

The decomposition and execution layers have the very same ideas and structure as the user layer, with the sole exception that the reasoning scheme for experiment specification (i.e. the execution layer) lacks the second distance and threshold, as no fuzzy recomputation is possible. We, thus, define them without extensive elaboration.
\begin{definition}
Let \erlang be an experiment request language and \exscheme be an experiment specification language. A request reasoning scheme $R_\erlang$ is a tuple
\[
R_\erlang = \left\langle\dist{\cdot,\cdot}_\erlang^{\mathsf{get}}, \dist{\cdot,\cdot}_\erlang^{\mathsf{comp}}, t_\erlang^{\mathsf{get}},t_\erlang^{\mathsf{comp}}\right\rangle
\]
where $\dist{\cdot,\cdot}_\erlang^{\mathsf{get}}, \dist{\cdot,\cdot}_\erlang^{\mathsf{comp}}$ are distances over requests $\mathcal{R}_\erlang$ and $t_\erlang^{\mathsf{get}},t_\erlang^{\mathsf{comp}} \in \mathbb{R}^+$ are two positive numbers.
 A specification reasoning scheme $R_\exscheme$ is a tuple
\[
R_\exscheme = \left\langle\dist{\cdot,\cdot}_\exscheme^{\mathsf{get}}, t_\exscheme^{\mathsf{get}}  \right\rangle
\]
where $\dist{\cdot,\cdot}_\exscheme^{\mathsf{get}}, \dist{\cdot,\cdot}_\exscheme^{\mathsf{comp}}$ are distances over queries $\mathcal{R}_\exscheme$ and $t_\exscheme^{\mathsf{get}},t_\exscheme^{\mathsf{comp}} \in \mathbb{R}^+$ are two positive numbers.
\end{definition}

% \begin{definition}
%     Let $LS = \langle \qlang,\erlang,\exscheme \rangle$ be a language structure.
%     A reasoning structure $RS$ is a tuple
%     \begin{align*}
%     RS = 
%     \langle
%     R_\qlang , R_\erlang , R_\exscheme 
%     \rangle_{LS}
%     \end{align*}
% \end{definition}

The design of insensitive structures is highly domain- and application-specific, but we retain the notion of $t$-sensitivity. The algorithms are, for brevity's sake, in the appendix.

\begin{definition}
We say that two requests $\request_1,\request_2$ with  $\dist{\request_1,\request_2}_\erlang^{\mathsf{get}} < t_\erlang^{\mathsf{get}}$ are $t$-insensitive if the following holds.
\[ \forall \response_1,\response_2.~\fulfilled(\request_1,\response_2) \wedge \fulfilled(\request_2,\response_2) \rightarrow \dist{\response_1,\response_2}_\mathsf{R_\erlang} < t
\]
\end{definition}

\begin{example}
    In the train running example, as the train is starting to break immediately, we can bound the additional distance it needs to break as follows.
    Let $r_1,r_2$ be two requests. For the sake of brevity, we define a simple exemplary distance between two requests as the absolute difference between their initial velocities, in case all other variables are equal, or $\infty$ otherwise.
    A corresponding exemplary distance between two responses in the train example is the absolute difference between the their values, in this case the stopping distance.
    
    Having the domain knowledge that the train follows the simple kinematic equation $\mathit{pos}(t) = v_0*t+0.5a*t^2$ with error $<v_0*0.1$, we can say that
    if the requests have distance $x$, then the responses have at most distance $0.1x$.
    \begin{align*}
        |r_1,r_2| < x \rightarrow |rs_1,rs_2| < x*0.1
    \end{align*}
\end{example}

We can now describe the whole workflow for re-use across all three layers in the algorithm in \Cref{fig:algo-total}.
It first attempts reuse on the user layer (line 5).
If no reuse is possible, it decomposes the query into requests (line 7) and attempts reuse for each of those (line 12). Those for which no reuse is possible are then subsequently completed into specifications (line 15). Finally, reuse is attempted for the experiment specifications (line 19).

\begin{figure}
\begin{lstlisting}[language=pseudo, caption={Pseudo-code algorithm for overall reuse across the three layers}, label={fig:algo-total}]
$\text{\textbf{function} }$process
$\text{\textbf{Input}: A query}$ $\quest$ and a store store
$\text{\textbf{Output}: \!\!\!A pair of a store and an answer}$
\end{lstlisting}
\begin{minipage}{.5\textwidth}
\begin{lstlisting}[language=pseudo,firstnumber=4]
$\langle$st, an$\rangle$ := reuse-user($\quest$,store)
if(an $\neq$ $\bot$) return $\langle$st, an$\rangle$

R := $\decomp(\quest)$ 
E := $\emptyset$            
RSP := $\emptyset$          
foreach $\request$ $\in$ R do
   $\langle$st, rs$\rangle$ := 
     reuse-composition($\request$,store)
   store := st
   if(rs $\neq$ $\bot$) RSP := RSP $\cup$ $\{$rs$\}$
   else       E := E $\cup$ $\complete($rs$)$ 
end
\end{lstlisting}
\end{minipage}
\begin{minipage}{.5\textwidth}
\begin{lstlisting}[language=pseudo, firstnumber=17]
foreach $\experiment$ $\in$ E do
   $\langle$st, er$\rangle$ := 
     reuse-experiment($\experiment$,store)
   store := st
   if(er $\neq$ $\bot$) rslt := er
   else       rslt := $\execute($e$)$ 
   RSP := RSP $\cup$ $\compute(\experiment,$rslt$)$ 
end

ans := $\aggregate$($\quest$,RSP) 
return $\langle $store$, $ans$ \rangle$
\end{lstlisting}
\end{minipage}
\end{figure}

\begin{theorem}[Soundness]
    Let $\quest$ be a query, $\mathsf{store}_{LS}$ a consistent store for a $t$-insensitive language structure, and $\langle \mathsf{store}',\answer \rangle = \mathtt{process}(\mathsf{store}, \quest)$. 
    The resulting store $\mathsf{store}'$ is consistent, i.e., \(\mathsf{consistent}(\mathsf{store'})\)
    and the returned answer is correct with sensitivity $t$.
    $\exists \answer'.~\answered(\quest,\answer') \wedge \dist{\answer,\answer'} \leq t$.
\end{theorem}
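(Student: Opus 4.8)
The plan is to branch on the way \texttt{process} returns and reduce each branch to the already-established per-layer correctness results together with the (earlier, unnamed) compatibility lemma, arguing consistency and answer-correctness separately. First I would dispose of the early exit in lines~4--5: if \texttt{reuse-user} returns a non-$\bot$ answer, then \texttt{process} returns exactly the output of \texttt{reuse-user}, so both claims follow verbatim from \Cref{lem:user} and no further work is needed. The substantial case is when \texttt{reuse-user} fails and control falls through to decomposition; here I would prove the two conclusions in turn.

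For consistency, I would observe that the store is altered only through the per-layer reuse calls \texttt{reuse-composition} and \texttt{reuse-experiment} (threaded by \texttt{store := st}) and through the recording of genuinely performed steps on the branches where reuse was impossible. The appendix analogues of \Cref{lem:user} for the decomposition and execution layers guarantee that each reuse call preserves consistency, covering conditions~(i)--(ii) of \Cref{def:consistent}. Each genuine invocation of \complete, \execute, and \compute adds a tuple that lies in the corresponding relation by compatibility of the decomposition and execution layers, preserving conditions~(iii)--(iv); the final \aggregate does not write to the store. A straightforward induction over the two \textbf{foreach} loops then yields consistency of $\mathsf{store}'$.

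For correctness of $\answer = \aggregate(\quest,\mathrm{RSP})$, the crux is to relate the accumulated set $\mathrm{RSP}$ to the ideal response set $F = \{\response \mid \exists \request \in \decomp(\quest).~\fulfilled(\request,\response)\}$ of \Cref{def:compat}. I would show that for each $\request \in \decomp(\quest)$ the response deposited into $\mathrm{RSP}$ either fulfils $\request$ exactly --- namely on the branch where $\request$ is completed, its experiment executed (or the result reused, with the per-layer guarantee) and passed through \compute, which is exactly compatibility of the decomposition and execution layers --- or is a fuzzily retrieved response lying within the request-layer tolerance of some fulfilling response. Hence $\mathrm{RSP}$ coincides with $F$ up to the per-layer retrieval tolerances. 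Invoking the $t$-insensitivity carried by the language structure, which by hypothesis bounds precisely how such per-layer response deviations propagate through \aggregate, I obtain $\dist{\aggregate(\quest,\mathrm{RSP}),\aggregate(\quest,F)} \leq t$. Since $\answered(\quest,\aggregate(\quest,F))$ holds by the compatibility lemma, taking $\answer' = \aggregate(\quest,F)$ establishes claim~(2).

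The step I expect to be the main obstacle is the last one: controlling how the independent retrieval tolerances of the composition and execution layers compose with one another and interact with \compute and \aggregate so that a single global bound $t$ is respected. The per-layer insensitivity definitions discharge each local deviation cleanly, but assembling them into one uniform sensitivity for the whole three-layer pipeline --- and making precise what a ``$t$-insensitive language structure'' must mean for this to go through --- is where the argument is genuinely delicate.
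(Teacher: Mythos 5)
Your proposal is correct and takes essentially the same route as the paper: the paper's own (two-sentence) proof likewise reduces the theorem to \Cref{lem:user} together with its unstated analogues for \texttt{reuse-decomposition} and \texttt{reuse-execution}, plus the observation that everything else stored or returned originates in a direct computation of \complete, \execute, \compute, or \aggregate, for which consistency and correctness hold by compatibility. Your write-up is in fact considerably more detailed than the paper's, and the tolerance-composition subtlety you flag at the end --- how the per-layer retrieval tolerances propagate through \compute and \aggregate into a single global bound $t$ --- is precisely the point the paper's ``follows directly'' glosses over without further argument.
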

\begin{proof}
For brevity, we avoid stating \Cref{lem:user} for \texttt{reuse-decomposition} and \texttt{reuse-execution} (described in \ref{sec:appendix-reuse-layers-code}), for which the respective proof is analogous. Overall correctness follows directly from these lemmas and the simple observation that all data stored and all answers returned originate in the reuse function or a direct computation, for which consistency trivially holds.
\end{proof}
\section{Managing Experiments and Experiment Data}\label{sec:architecture}

The proposed mechanisms rely on the correct interplay of numerous components, each of which must scale on its own, has a unique life-cycle and requires specific technical knowledge for development and maintenance. This is particularly critical as experiments and reasoning may require monitoring a process for a longer time.
To enable the scalability of our system, we propose a service-oriented architecture that handles experiments and their reuse.
We emphasize several aspects that must be considered by the architecture.

First, we observe that the experiment store contains two components that require fundamentally different data storage technologies: Experiment data is high-volume, numerical time series data with semantic tags, while the remaining structure is symbolic data that refers to experiment data. 
Furthermore, the experiment data storage must be accessed by the experiments, while the rest is stored for reuse and reasoning. We therefore use two different stores to emphasize this difference, but they can easily be combined using a federated database.
    
Second, we must consider that experiments run extensive simulations or physical setups, thus the component handling them must be decoupled from the remaining workflow for responsiveness. This points towards an event-driven architecture, which we assume in the following.
    
Finally, if the system is to be used by several units of an organization, or several different organizations, in case it is offered as a service, reuse must be controlled so no confidential data is exposed through the system.
    
\Cref{fig:Architecture} shows an overview of our architecture for two input query languages and two different kinds of experiments. 
In the following, we describe each component and which parts of the formal description it uses or addresses.

\begin{figure}[ht]
    \centering
    \includegraphics[width=\textwidth]{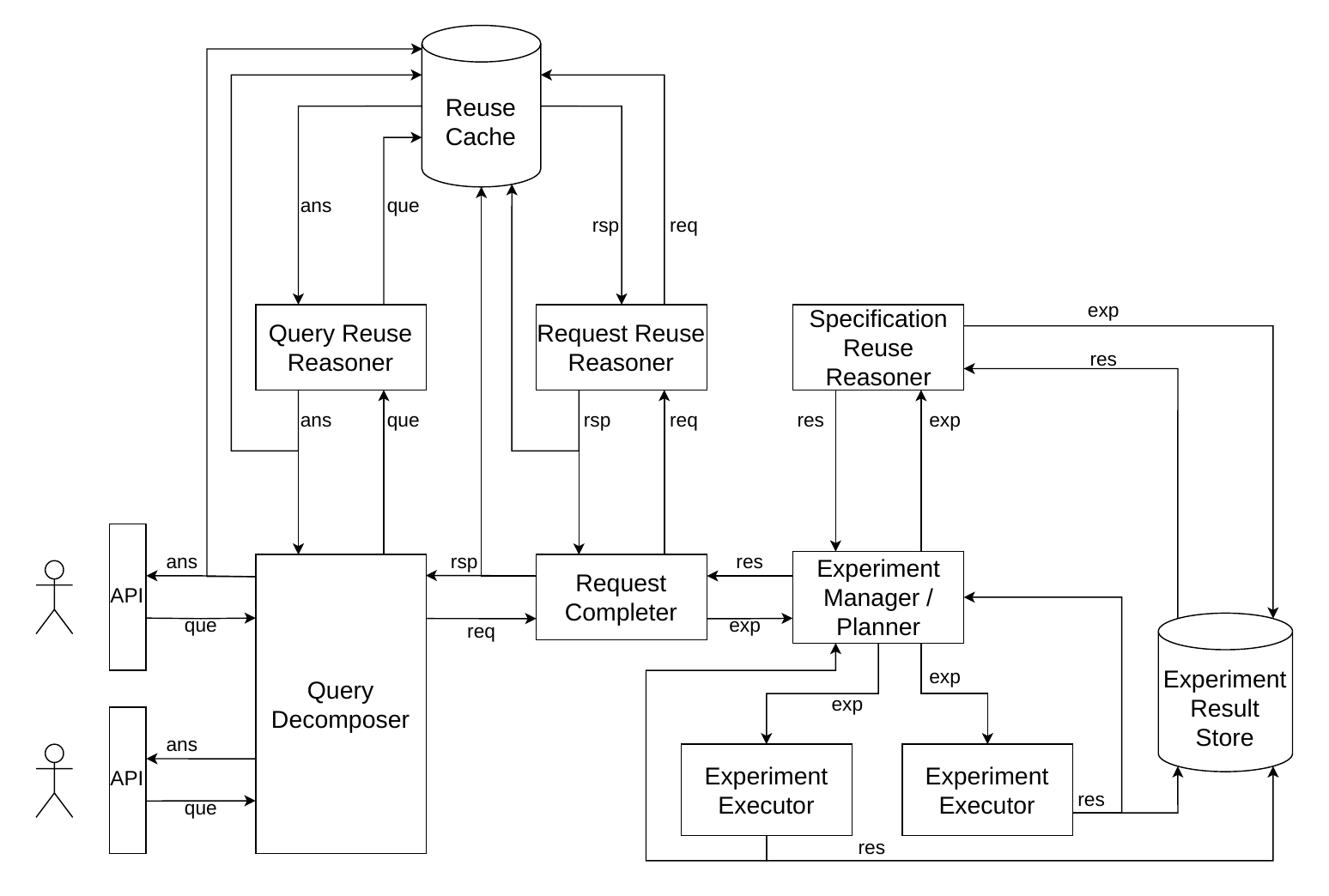}
    \caption{Sketch of the architecture to realize the proposed framework}
    \label{fig:Architecture}
\end{figure}

\paragraph{API and Query Decomposer} The highest-level interface to the system is via an Application Programming Interface (API), which enables user agents to pose queries to a set of predefined query languages. The APIs possibly realize a set of query languages \qlang or just a single \qlang.  Each user query is communicated to the Query Decomposer component.
The query decomposer, for each query:
\begin{enumerate}
\item First passes the query to the Query Reuse Reasoner component to realize the \texttt{reuse-user} function. If a corresponding reusable answer is found and sent back from the Reasoner component, it communicates the answer to the user via the API.
\item If no reuse is possible at the user-level, it realizes the \decomp function to decompose the query into its corresponding experiment request/s and passes them to the Request Completer. Once the corresponding responses arrive, it realizes the \aggregate function to aggregate the responses into an answer that is sent to the user via the API. In such a scenario, the Query Decomposer also communicates with the Reuse cache to cache the corresponding answer.
\end{enumerate} 

\paragraph{Query Reuse Reasoner and Request Reuse Reasoner} The reuse on the user layer and the decomposition layer is realized by the Query Reuse Reasoner and the Request Reuse Reasoner, which respectively implement the \texttt{reuse-user} and \texttt{reuse-decompose} functions. If a similar query/request is used in the reply, or an answer/response is derived from previous queries/requests, then the Reasoner components are also responsible for updating the cache to store the association between the current query/request and the corresponding (derived or nearby) answer/response in the cache (which is shown as the outward arrow from the Reasoner blocks providing an answer \answer to the Reuse Cache).

\paragraph{Request Completer} The request completer, for each request:
\begin{enumerate}
\item First passes the request to the Request Reuse Reasoner component to realize the \texttt{reuse-decomposition} function. If a corresponding reusable response is found and sent back from the Reasoner component, it communicates the response to the Query Decomposer.
\item If no reuse is possible at the decomposition-level, it realizes the \complete function to complete the request into its corresponding experiment specification/s and passes them to the Experiment Manager / Planner component. Once the corresponding results arrive, it realizes the \compute function to compute the property of interest that is sent to the Query Decomposer component. In such a scenario, the Request Completer also communicates with the Reuse cache to cache the corresponding response.
\end{enumerate}

\paragraph{Experiment Manager / Planner}
This component realizes \exscheme and uses $\leq_\exscheme$ to plan, schedule and optimize the experiment specifications stemming from one query. However, before performing the above tasks, the Experiment Manager communicates the experiment specification to the Specification Reuse Reasoner to realize the \texttt{reuse-execution} also optimizes the execution of experiments stemming from different queries. As such, there is only one planner in the system. The planner does not realize \execute, but sends an experiment specification to an experiment executor for execution.

\paragraph{Specification Reuse Reasoner}
Similar to the previous reasoning components, the Specification Reuse Reasoner realizes the \texttt{reuse-execution} function.

\paragraph{Experiment Executor}

An Experiment Executor executes the experiment corresponding for each experiment specification passed to it, i.e., it realizes the \execute function. The resulting data is stored in the Experiment Result Store, and the Experiment Planner is notified. There are two Experiment Executors shown in \Cref{fig:Architecture}, each responsible for running a different kind of experiment.

\paragraph{Storage}
There are two storages that together realize an experiment store $S = \langle H,F,P,C \rangle$\footnote{They realize one experiment store for all query languages used, so that only the $H$ components is actually duplicated.}. The \emph{Experiment Result Store} stores the results of the experiments, i.e., it realizes $P$ and is updated by the experiment manager. The \emph{Reuse Cache} realizes the remaining components $H,F,C$, where $H$ is updated by the Query Reuse Reasoner, $F$ by the Request Reuse Reasoner and $C$ by all components. Note that the \emph{Reuse Cache} needs to be implemented with certain TTL (time-to-live) policies that ensure outdated information in the cache is purged a.k.a.~cleaned continually.

\section{Evaluation}\label{sec:evaluation}
% We evaluate our approach by instantiating it for two case studies. 
% The first is based on physical experiments to determine the frequency of a  notch filter, the second is based on simulation experiments to determine the layout of a component in the automotive domain, taken from Cederbladh et al.~\cite{Cederbladh6985}.
% \paragraph{Research Questions}

% \subsection{Case Study: Filter}

% \subsection{Case Study: Battery and thermal management system}
%This case study is based on decision support for a battery and thermal management system (TMS) in an automotive design for earth moving machinery~\cite{Cederbladh6985}. 
%\jc{Electric machines equipped with batteries require efficient management during operation. A TMS provides several functions, such as acting as an interface to the battery in addition to safety mechanisms. During operation, the battery generates heat (due to internal losses and chemical processes), gradually increasing the internal temperature. For safe and optimal operation, there are specific temperature ranges the battery needs to maintain (usually a span of a few degrees). The TMS shall during operation both provide heating and cooling functions to the battery. In our case we consider TMSs that regulate temperature through heat exchange with coolant.}
Let us return to the Thermal Management System (TMS) case-study described in~\Cref{sec:overview} and instantiate our framework for this realistic industrial scenario to evaluate whether it can indeed be used to express domain knowledge for reuse, and enable reuse to reduce experiment executions. First, recall that the design of the TMS fixes the layout of the necessary components, i.e., the required components and their interfaces, and experiments must be used to determine the parameters of the components.

The queries posed by the engineers in the context of early engineering are (1) under-specified yet constrained TMS designs, i.e.,~the experiment requests and (2) fully specified TMS designs where experiment specifications are scripts that load and configure simulation units to perform the experiments. Reuse is possible, and indeed required, to perform as few experiments as possible.

\subsection{Formalizing case-study as a tripartite structure}
In the following, we formally instantiate our framework for this case study. 
We consider layouts determined by two components: a battery and its environment. The aim of the engineer is to derive a Pareto optimal solution (a.k.a.~Pareto front) that guides further decision-making. Domain knowledge must encode when an experiment will, in no case, move the Pareto front.

\begin{definition}
    A \emph{layout} is a pair $\langle B,\mathsf{Env} \rangle$ of two maps $B, \mathsf{Env}$ with the following signatures.
\begin{align*}
    B &: \{\mathsf{Voltage},  \mathsf{MaxTorque}, \mathsf{InternalRes}\} \rightarrow \mathbb{R} \cup \{\ast\}\\
    \mathsf{Env} &: \{\mathsf{stim}\} \rightarrow \mathsf{Stim}
\end{align*}
\end{definition}

The map $B$ models the parameters of the battery and the map $\mathsf{Env}$ models the parameters of its environment. We use the special symbol $\ast$ to denote that a parameter is under-specified and yet to be determined. $\mathsf{Stim}$ denotes the set of stimulation profiles for a battery, i.e., descriptions of a usage profile.

A layout is \emph{fully determined} if no variable is mapped to $\ast$. Otherwise, it is \emph{under-specified}.
A \emph{constrained TMS} (CTMS) is a pair of a TMS and a set of inequalities over all variables that map to $\ast$.

Let $\mathsf{Ineq}$ be the set of satisfiable inequalities over the parameters of layouts.
If a layout $\langle B,\mathsf{Env} \rangle$ satisfies an inequality $\mathit{ieq}$, then we write $\langle B,\mathsf{Env} \rangle \models \mathit{ieq}$.
We furthermore denote with $\mathsf{fulldet}(\langle B,\mathsf{Env} \rangle)$ the set of layouts that are (1) fully determined and (2) $\langle B,\mathsf{Env} \rangle \models \mathit{ieq}$, $\forall \mathit{ieq} \in \mathsf{Ineq}$.

%The language structure  $\mathit{LS}_\mathsf{TMS} = \langle \qlang_\mathsf{TMS},\erlang_\mathsf{TMS},\exscheme_\mathsf{TMS} \rangle$  for TMS design formalizes our intuition about underspecified layouts.

\begin{example}
    The query language $\qlang^\mathsf{TMS}$ for the TMS case-study is defined by the tuple $\langle V^\mathsf{TMS}_\qlang, \mathcal{L}^\mathsf{TMS}_\qlang, \Delta^\mathsf{TMS}_\qlang, A^\mathsf{TMS}_\qlang\rangle$ with the following definitions.
    \begin{align*}
        V^\mathsf{TMS}_\qlang &=  \mathit{InTMS} \cup \{\mathsf{Constr}\}\\
        \mathcal{L}^\mathsf{TMS}_\qlang &= \{ V^\mathsf{TMS}_\qlang \}\\
        \Delta^\mathsf{TMS}_\qlang &=  \big(\mathit{InTMS}\rightarrow (\mathbb{R} \cup \{\ast\})\big) \cup \big(
        \mathsf{Constr} \rightarrow \mathsf{Ineq}\big) \\ %\cup \big( \mathsf{PoI} \rightarrow \mathcal{P}(\mathit{OutTMS})\setminus\emptyset\big)\\
        A^\mathsf{TMS}_\qlang &= \mathcal{P}\big(\mathit{InTMS} \cup \mathit{OutTMS} \rightarrow \mathbb{R}\big)
    \end{align*}
\end{example}

$\mathit{InTMS}$ denotes the subset of the variables that are layout parameters of the battery and its environment:
\[
\mathit{InTMS} = \mathbf{dom} B \cup \mathbf{dom} \mathsf{Env} = \{\mathsf{Voltage},  \mathsf{MaxPower}, \mathsf{InternalRes}, \mathsf{stim} \}
\]

$\mathsf{Constr}$ represents the set of constraints. %and $\mathsf{PoI}$ represents the set of output variables we are interested in.
The query scheme $\mathcal{L}^\mathsf{TMS}_\qlang$ ensures that all variables must be specified. Function $\Delta^\mathsf{TMS}_\qlang$ defines the possible values for the variables:
The input variables, stimulation and constraint variable are mapped as in the layout.%, while $\mathsf{PoI}$ maps to a non-empty subset of $\mathit{OutTMS}$.

Simulation of the battery system with a given configuration yields two outputs: 
TotalBatteryLosses ($\mathsf{TBL}$), and Minimal State of Charge ($\mathsf{SoC}$). We denote these variables with $\mathit{OutTMS}$. %The answer is the \emph{best} possible value from all possible configurations, which in our case, is just the maximum of all fully determined layouts. %Its input variables are the three inputs for the battery, the variable for the stimulation, and two special variables:  All variables must be used according to $\mathcal{L}^\mathsf{TMS}_\qlang$.

The answer returned to the user is a set of tuples, each representing a fully determined battery configuration and the outputs from simulation of that configuration, with one entry per input and output variable. The answer is modelled as a Pareto front $P$ stemming from the optimization problem that $\mathsf{TBL}$ must be minimized, while $\mathsf{SoC}$ must be maximized.

%The request language is directly derived from the question language: We demand that the layout must be fully determined. 
\begin{example}
    The request language $\erlang^\mathsf{TMS}$ for TMS is defined by the following tuple $\langle V_\erlang^\mathsf{TMS}, I_\erlang^\mathsf{TMS}, \Delta_\erlang^\mathsf{TMS}\rangle$ with the following definitions.
    \begin{align*}
         V_\erlang^\mathsf{TMS} &= \mathit{InTMS}\\ %\cup \{\mathsf{PoI}\}\\
         I_\erlang^\mathsf{TMS} &=\mathit{InTMS} \cup \mathit{OutTMS} \rightarrow \mathbb{R}\\
         \Delta_\erlang^\mathsf{TMS} &= \mathit{InTMS} \rightarrow \mathbb{R}
    \end{align*}
\end{example}

\begin{example}
    The specification language $\exscheme^\mathsf{TMS}$ is defined by the following.
    \[
    \langle\experiments^\mathsf{TMS}, \leq_\exscheme^\mathsf{TMS}, \mathsf{Res}_\exscheme^\mathsf{TMS}
    \rangle 
    \]
    where $\experiments^\mathsf{TMS} = \mathsf{Load}(\overline{x}) \cup \mathsf{Execute}$ is the set of all pairs of FMUs loaded for the experiments with parameters $\overline{x}$ and the set of all master algorithms $\mathsf{Execute}$. The order defines that each element of $\mathsf{Load}(\overline{x})$ is to be executed before any element of $\mathsf{Execute}$. Finally:
    \begin{align*}
        \mathsf{Res}_\exscheme^\mathsf{TMS} &= \mathbb{R}^+ \rightarrow\mathbb{R} \times \mathbb{R}
    \end{align*}
\end{example}

The specification language is not based on the request language -- it specifies the simulation experiment itself, which in our case means that it describes two simulation units and their co-simulation master algorithm. The results are time-series over both output variables. In our example, the simulation experiment has two steps: first, load the FMUs and initialize the algorithm, second, execute the algorithm and store it.

We furthermore need functions to translate between the different layers.
\begin{description}
    \item[Decomposition] Decomposition translates a query into a set of requests. In our case, it generates a set of all fully instantiated layouts that have at least a minimal distance to each other\footnote{Note that we use the reals to model inputs, but even when using a finite set, the set of all fully instantiated layouts would not be practical. While reuse would solve this, it is unnecessary to generate these requests in the first place.}. Let $d > 0$ denote this distance.
    \begin{align*}
    &\decomp\big(\langle B,\mathsf{Env} \rangle \cup \{\mathsf{Constr} \mapsto \mathit{ieq}, \mathsf{PoI} \mapsto P\}\big) \\
    =& \Big\{\big(\langle B',\mathsf{Env}' \rangle \cup \{\mathsf{PoI} \mapsto P\}\big) ~|~ \langle B',\mathsf{Env}' \rangle \models \mathit{ieq}, \langle B',\mathsf{Env}' \rangle \in \mathsf{compl}(\langle B,\mathsf{Env} \rangle)\Big\}
    \end{align*}
    Such that \[\forall r,r' \in \decomp\big(\langle B,\mathsf{Env} \rangle \cup \{\mathsf{Constr} \mapsto \mathit{ieq}, \mathsf{PoI} \mapsto P\}\big).~|r-r'| \doteq d\]
    \item[Aggregation]
    Aggregation is computing the Pareto front from all responses. This is standard, and we abstract it into a function $\mathsf{opt}$.
    \begin{align*}
    &(\langle B,\mathsf{Env} \rangle, P) \\
    = & \mathsf{opt}(\{\response \mid \forall r,r' \in \decomp\big(\langle B,\mathsf{Env} \rangle \cup \{\mathsf{Constr} \mapsto \mathit{ieq}, \mathsf{PoI} \mapsto P\}\big)\})
    \end{align*}
    \item[Completion] Completion maps the request to a specification -- in case of the battery, this refers to a script that runs an FMU configured with the pre-defined standard drive cycle for the battery. 
    
    \item[Computation] Computation is a function that computes the metrics $\mathsf{TBL}$ and $\mathsf{SoC}$, from the traces of the executed simulation. It simply returns the last values of the respective traces.
\end{description}

In this case study, we identify the following reuse mechanisms:

The first is reuse on the user level -- this can be the case if the user poses a query that is close to one asked before. Maybe the user is not aware of the first query, because it was posed by another user, or because they are underestimating the distance between the queries.
 
The second is distance-based reuse on the request level. As the request is not obvious to the user, who only sees the query, several queries may result in the same or similar requests. We demonstrate reuse on the decomposition layer using the following two requests $\request_1$ and $\request_2$ as examples. 
\begin{align*}
\request_1 &= \{\mathsf{Voltage} \mapsto V_1, \mathsf{MaxTorque} \mapsto T_1, \mathsf{InternalRes} \mapsto R_1, \mathsf{PoI} \mapsto \{ \mathsf{SoC}_1\}\}\\
\request_2 &= \{\mathsf{Voltage} \mapsto V_2, \mathsf{MaxTorque} \mapsto T_2, \mathsf{InternalRes} \mapsto R_2, \mathsf{PoI} \mapsto \{ \mathsf{SoC}_2\}\}
\end{align*}
    
Formally, we instantiate fuzzy retrieval as follows. The distance between two requests is defined as a Boolean function that compares each of the first three parameters ($\mathsf{Voltage}, \mathsf{MaxTorque}, \mathsf{InternalRes}$) with their individual distance thresholds. This is necessary because, for example, a difference of 5 V in the $\mathsf{Voltage}$ may be small, but a difference with the same magnitude (i.e. 5 \ohm) in the $\mathsf{InternalRes}$ of the battery is more significant.

\begin{align*}
    \dist{\request_1,\request_2}_\erlang^{\mathsf{get}} &=
    \begin{cases}
        0 & \text{if } \quad |V_1 - V_2| \leq t_\mathsf{Voltage} \\
        & \wedge \quad |T_1 - T_2| \leq t_\mathsf{MaxTorque}  \\
        & \wedge \quad |R_1 - R_2| \leq t_\mathsf{InternalRes} \\
        \infty & \text{otherwise}
    \end{cases}
\end{align*}

Note that there are many ways to account for multi-variate fuzzy retrieval -- another approach (that we mention here, but do not use in the evaluation) is to define the distance between two requests as the sum of normalized differences between all variables, with a threshold roughly corresponding to the precision of the experiments.

\begin{align*}
    \dist{\request_1,\request_2}_\erlang^{\mathsf{get}} &= \frac{|V_1 - V_2|}{V_1+V_2} + \frac{|T_1 - T_2|}{T_1 + T_2} + \frac{|R_1 - R_2|}{R_1+R_2}
\end{align*}
 
The third reuse mechanism, symbolic reuse, is defined as follows. If one can reason about the optimality of a new response based on existing ones, we can avoid executing the request. We have identified the following symbolic reuse for the case-study:

\begin{itemize}
%\item As $\mathsf{Voltage}$ is indirectly proportional to $\mathsf{TBL}$ (Total Battery Losses), and we aim to minimize $\mathsf{TBL}$, we can ignore (i.e. not execute) all requests that are identical (to a previously requested experiment) except having a lower $\mathsf{Voltage}$ parameter, during the Pareto-optimization, since the parameters of the old experiment definitely dominate the newly requested.

\item The final $\mathsf{SoC}$ (State of Charge) of the battery is directly proportional to $\mathsf{Voltage}$ and $\mathsf{MaxTorque}$, and inversely proportional to $\mathsf{InternalRes}$. Recall that a battery layout is unstable if the lowest state of charge at any point in the drive cycle is lower than 50\%. If a previous experiment is found in the store which is more stable in parameters (i.e. based on the previously described inequalities, its $\mathsf{SoC}$ will be higher than the current request) but its simulation trace indicates instability, then it can be inferred that the current requested experiment is also an unstable combination of parameters and hence need not be executed.
\end{itemize}
 
Formally, we define the symbolic reuse intuition as follows. Note that a False answer is interpreted by the optimization algorithm as a point in the Pareto-optimization problem space to be skipped (since it is evident that it is going to be dominated). 
    \begin{align*}
        &\mathsf{justify}_{\erlang^\mathsf{TSM}}(\request_1,a,\request_2) = False    \\
        &\Leftrightarrow \exists \request_2 \mid \\\
        %&\phantom{\vee} \big(V_1 \leq V_2 \wedge T_1 = T_2 \wedge R_1 = R_2 \big) \vee\\
        &\big(V_1 \leq V_2 \wedge T_1 \leq T_2 \wedge R_1 \geq R_2 \wedge \mathsf{SoC}_2 < 50\big) 
    \end{align*}
    
The fourth and final reuse mechanism is to reuse experiment data on the lowest layer: If one request was used to retrieve one PoI, then the generated data/traces still contain the time series for the other PoI, and can thus be reused without executing the experiment again. Formally, we instantiate fuzzy recomputation as follows.
     
    \begin{align*}
        \dist{\request_1,\request_2}_\erlang^{\mathsf{comp}} &= 
        \begin{cases}
        0 & \text{if}~V_1 = V_2, T_1 = T_2, R_1 = R_2\\
        \infty & \text{otherwise}\\
        \end{cases} \\
        t_{\erlang}^{\mathsf{comp}} = 1
    \end{align*}

% \jc{What do we actually need here?}

% \begin{itemize}
%     \item Construction machine - maybe a dumper
%     \item Brake case - environment, system, specific case
%     \item Standards - Reference guiding standards? 
%     \item Previous work - Reference and strengthen claims? Some examples for system brake: \cite{cederbladh2023light,cederbladh2024towards,Cederbladh6985,cederbladh2024correlating}
% \end{itemize}

% \subsection{Experiment Setup}

\subsection{Qualitative Evaluation}
To evaluate our experiment reuse framework, we prototypically implemented both the case-study (i.e., the battery layout) and the running example (i.e., the train braking behavior), to answer the following research questions.
\begin{description}
    \item[RQ1] Does the proposed framework reduce the number of experiments performed with an increase in the number of queries it answers? %versus the number of queries? 
    \item[RQ2] Does experiment reuse increase efficiency of answering a query w.r.t time and memory?
    \item[RQ3] Does the framework support explicating the domain knowledge needed for reuse of experiments?
\end{description}
 
\subsubsection{Prototypical Implementation}

In the following, we present the experiment setup and the reusable artifacts supplementing this paper, which are available in the form of pre-compiled Docker images under \href{https://doi.org/10.5281/zenodo.15528008}{https://doi.org/10.5281/zenodo.15528008}.
We prototyped the framework using a number of off-the-shelf software technologies:

\begin{itemize}
\item PostgreSQL \footnote{\url{https://www.postgresql.org}} with the TimescaleDB\footnote{\url{https://github.com/timescale/timescaledb}} and PGVector\footnote{\url{https://github.com/pgvector/pgvector}} extensions. The TimescaleDB extension was used to efficiently store and query time-series data, while the PGVector extension enabled efficient vector-based similarity search for a number of distance functions (for fuzzy reuse).
\item Redis for fast caching of the requests and responses (which together with the PostgreSQL database formed the Experiment Store $S = \langle H,F,P,C \rangle$)
\item Python3 with FastAPI, Uvicorn, and Pydantic to deploy the backend API along with request and response model validation. The code realizing the various functions like \decomp, \aggregate, \compute, \complete, and \execute was also implemented in separate Python modules. FMPy was used to simulate the FMU provided by our industrial partner. 
\item Node.js with React and TypeScript to create and deploy the Graphical User Interface as shown in \Cref{fig:gui}.
\item Docker to containerize all individual components for modularity, maintainability, and reusability.
\end{itemize}

\begin{figure}
    \centering
    \includegraphics[width=\linewidth]{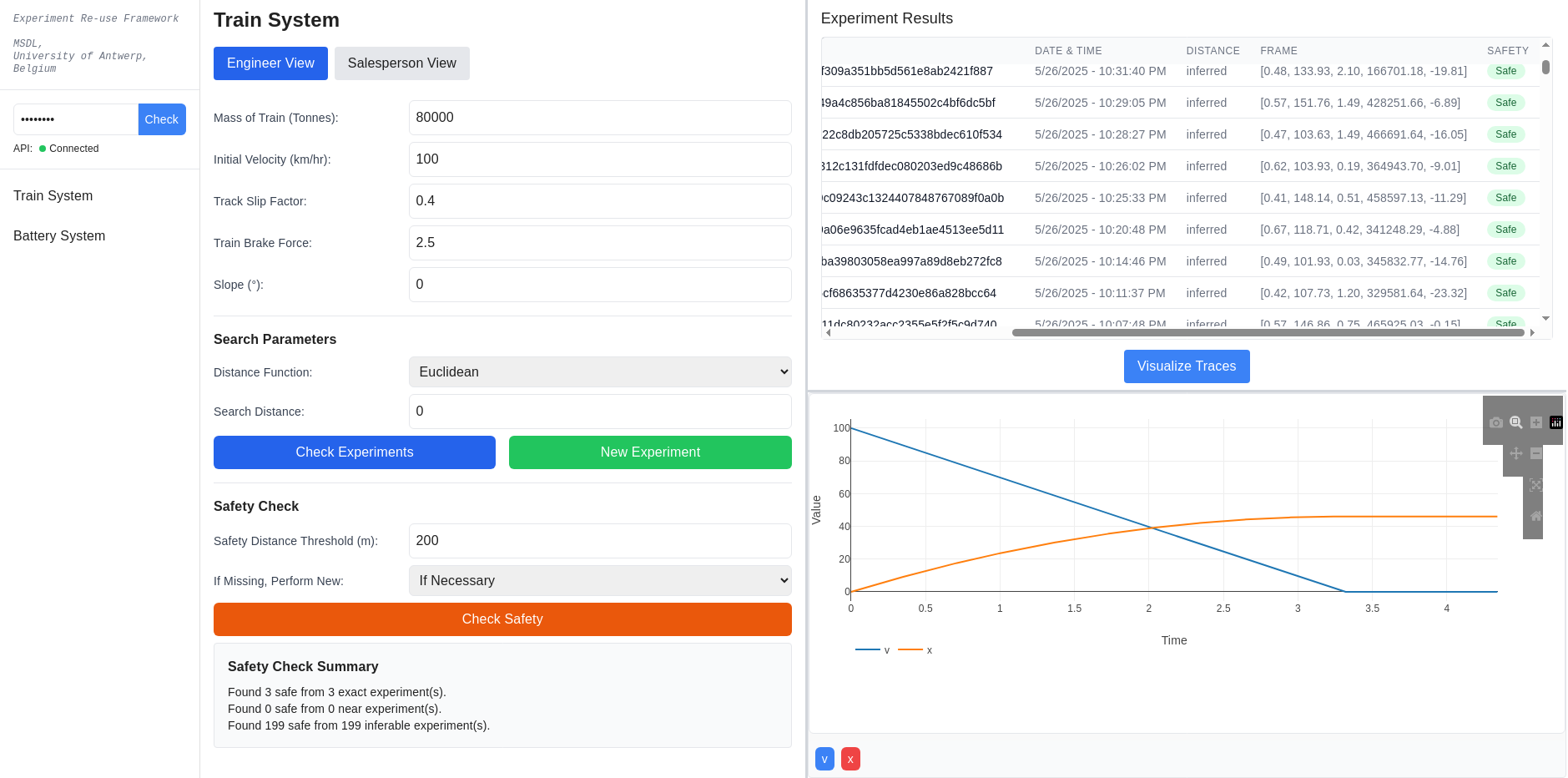}
    \caption{Web-based Graphical User Interface in the implementation}
    \label{fig:gui}
\end{figure}

\subsubsection{Evaluation Setup}

To answer each question, we performed different evaluations on our prototypical implementation as described below:

\paragraph{\textbf{RQ1}: Does the proposed framework reduce the number of experiments performed with an increase in the number of queries it answers?}

For the battery layout case-study, we repeatedly performed Pareto-optimal search in a pre-defined parameter space with 490000 potential configurations, defined as follows: 
\begin{align*}
\mathsf{Voltage} \in \{ 200 + n \cdot 0.1 \mid n \in \mathbb{N}_0,\ 200 + n \cdot 0.1 < 210 \} \wedge
&\\
\mathsf{MaxTorque} \in \{ 400 + n \cdot 0.1 \mid n \in \mathbb{N}_0,\ 400 + n \cdot 0.1 < 410 \} \wedge
&\\
\mathsf{InternalRes} \in \{ 0.02 + n \cdot 0.01 \mid n \in \mathbb{N}_0,\ 0.02 + n \cdot 0.01 \leq 0.50 \}
\end{align*}

Individual distance thresholds for fuzzy matching were used instead of specifying a single Euclidean distance threshold (as described above).
Three repetitions (starting from empty experiment stores) were carried out for each set of fuzzy matching distance thresholds in a decreasing order to simulate reduction in uncertainty as early engineering workflows progress over time. The search-space in the Pareto-optimization algorithm was explored randomly in each repetition. Four different distance measures were used. 

For the train system running example, starting from an empty experiment store, we repeatedly performed safety requests with random parameters (with 64-bit floating point precision) within the following space, and recorded the number of actually executed experiments at regular intervals:

\begin{align*}
    (100 \leq m \leq 20000 ) \wedge (0.02 \leq F_B \leq 2.5) \wedge (10 \leq v \leq 600 ) \\
    \wedge (0 \leq slip \leq 1 ) \wedge (-25 \leq \theta \leq 25 ) \wedge (200 \leq dist \leq 2000)
\end{align*}

\paragraph{\textbf{RQ2}: Does experiment reuse improve the time needed to answer a query?}

The evaluation was carried out on a workstation with an i7-10510U CPU, 16 GB RAM, and an NVMe SSD, by profiling the Python code in the prototype.

We profiled the impact of reuse on the time taken for Pareto-optimal search of the battery system layout parameters by repeatedly asking random optimization queries of 1000-point (10 equidistant points on each axis) subspaces within the configuration space defined by the following constraints:

\begin{align*}
\mathsf{Voltage} \in \{ 200 + n \cdot 10 \mid n \in \mathbb{N}_0,\ 200 + n \cdot 10 < 600 \} \wedge
&\\
\mathsf{MaxTorque} \in \{ 400 + n \cdot 10 \mid n \in \mathbb{N}_0,\ 400 + n \cdot 10 < 1000 \} \wedge
&\\
\mathsf{InternalRes} \in \{ 0.02 + n \cdot 0.01 \mid n \in \mathbb{N}_0,\ 0.02 + n \cdot 0.01 \leq 0.50 \}
\end{align*}
-- which resolves to 153,000 points. By the Dirichlet principle, it can be shown that there will be a guaranteed overlap in the subspaces of the queries, if more than $153,000 / 1000 = 153$ queries are asked, which will be useful to show the benefits of the reuse framework.

To evaluate the evolution of time taken in answering a train safety query, we repeatedly posed train safety querys with random parameters (within the previously defined range, and with 64-bit floating point precision), and measured the average time taken to answer the train safety queries as the number of queries increased. A number of different configurations (to deeply understand where time is being consumed) of experiment management systems in broadly two categories -- with and without the storage of experiment trace data, described as follows, were tested:

\begin{enumerate}
\item Storage of neither the experiment data nor the experiment metrics. 

\item Storage of the experiment metrics (which is required for reuse) with the proposed tripartite framework.

\item Storage of the experiment metrics without any reuse mechanism or inference. 

\item Storage of the experiment traces (for full reuse) with the proposed tripartite reuse framework.

\item Storage of the experiment traces without any reuse mechanism or inference. 
\end{enumerate}

\paragraph{\textbf{RQ3}: Does the framework support explicating the domain knowledge needed for reuse of experiments?}

We evaluated this based on a study of the prototypical artifacts (code) for the reuse mechanisms.

\subsubsection{Evaluation Results}

The results of the evaluations are described w.r.t each research question:

\paragraph{\textbf{RQ1}: Does the proposed framework reduce the number of experiments performed with an increase in the number of queries it answers?} %Does the proposed framework reduce the number of experiments performed versus the number of queries/requests?}
Evaluations of the battery system case-study and the train braking running example show that the proposed framework indeed reduces the number of experiments performed.

The results from the battery system case-study are displayed in \Cref{tab:battery-table}. 

The results corroborate the intuition that as the maximum similarity distance for fuzzy reuse decreases (representing reducing uncertainty as early systems engineering progresses), the number of experiments actually performed increases. Only 24 experiments were executed on average (in a 490000 point space) for the highest distance threshold, defined by \[(t_\mathsf{Voltage} = 5, t_\mathsf{MaxTorque} = 5, t_\mathsf{InternalRes} = 0.05)\] In contrast, for the lowest distance threshold, an average of 1435 experiments were performed: \[(t_\mathsf{Voltage} = 0.5, t_\mathsf{MaxTorque} = 0.5, t_\mathsf{InternalRes} = 0.01)\]

Notably, the results also support the intuition that as the uncertainty decreases, the number of points in the Pareto-optimal space also decreases. While there were in average only 3587 Pareto-optimal configurations for the lowest distance threshold, there were 62192 average Pareto-optimal configurations for the highest distance threshold. 

The results from the evaluation on the train braking safety running example are displayed in \Cref{fig:train-ratio}. The figure displays the ratio of experiments actually performed versus the number of train configuration braking safety requests performed in succession. The results agree with the conclusion that there is indeed a reduction in the number of experiments performed (since the ratio is less than 1). Notice that the ratio seems to stabilize around a constant value ($\approx$ 0.2) as the number of requests increases. Repeated evaluations with different fuzzy reuse distance thresholds did not effect the value ($\approx$ 0.2), which suggests that this emergent ratio is dominated by the experiment parameter space and the case-based reasoning rules described in RQ3, rather than the fuzzy reuse distance threshold.

\begin{table}[]
\begin{adjustbox}{max width=\textwidth}
\begin{tabular}{|ccc|ccc|ccc|}
\hline
\multicolumn{3}{|c|}{Similarity Distance}                                & \multicolumn{3}{c|}{Experiments Performed}                                              & \multicolumn{3}{c|}{Points in Pareto Front}                           \\ \hline
\multicolumn{1}{|c|}{Voltage} & \multicolumn{1}{c|}{Resistance} & Torque & \multicolumn{1}{c|}{Round 1}       & \multicolumn{1}{c|}{Round 2}       & Round 3       & \multicolumn{1}{c|}{Round 1} & \multicolumn{1}{c|}{Round 2} & Round 3 \\ \hline
\multicolumn{1}{|c|}{0.5}     & \multicolumn{1}{c|}{0.01}       & 0.5    & \multicolumn{1}{c|}{\textbf{1387}} & \multicolumn{1}{c|}{\textbf{1569}} & \textbf{1348} & \multicolumn{1}{c|}{5295}    & \multicolumn{1}{c|}{3422}    & 2043    \\ \hline
\multicolumn{1}{|c|}{1}       & \multicolumn{1}{c|}{0.02}       & 1      & \multicolumn{1}{c|}{\textbf{430}}  & \multicolumn{1}{c|}{\textbf{322}}  & \textbf{297}  & \multicolumn{1}{c|}{9921}    & \multicolumn{1}{c|}{7624}    & 12544   \\ \hline
\multicolumn{1}{|c|}{2}       & \multicolumn{1}{c|}{0.03}       & 2      & \multicolumn{1}{c|}{\textbf{100}}  & \multicolumn{1}{c|}{\textbf{92}}   & \textbf{92}   & \multicolumn{1}{c|}{27415}   & \multicolumn{1}{c|}{40152}   & 32286   \\ \hline
\multicolumn{1}{|c|}{5}       & \multicolumn{1}{c|}{0.05}       & 5      & \multicolumn{1}{c|}{\textbf{28}}   & \multicolumn{1}{c|}{\textbf{24}}   & \textbf{19}   & \multicolumn{1}{c|}{58652}   & \multicolumn{1}{c|}{27525}   & 100400  \\ \hline
\end{tabular}
\end{adjustbox}
\caption{Evaluation of the battery system case-study -- a Pareto-optimal search of the best parameters to minimize $\mathsf{TBL}$ and maximize $\mathsf{SoC}$ in the parameter space: $
\mathsf{Voltage} \in \{ 200 + n \cdot 0.1 \mid n \in \mathbb{N}_0,\ 200 + n \cdot 0.1 < 210 \} 
$
$
\mathsf{MaxTorque} \in \{ 400 + n \cdot 0.1 \mid n \in \mathbb{N}_0,\ 400 + n \cdot 0.1 < 410 \} 
$
$
\mathsf{InternalRes} \in \{ 0.02 + n \cdot 0.01 \mid n \in \mathbb{N}_0,\ 0.02 + n \cdot 0.01 \leq 0.50 \}
$, which resolves to 490,000 points to be evaluated.}
\label{tab:battery-table}
\end{table}

\begin{figure}
    \centering
    \includegraphics[width=\linewidth]{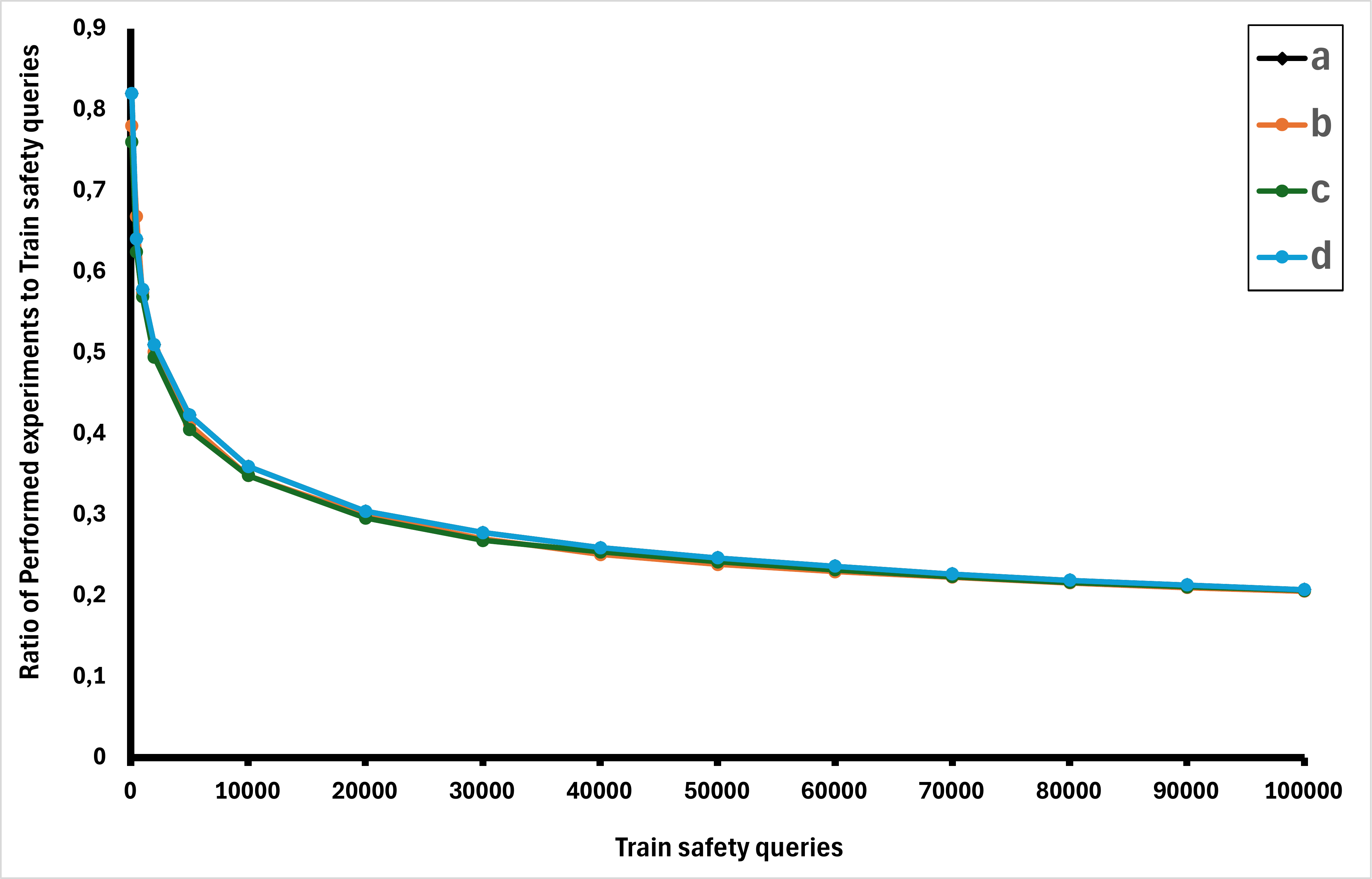}
    \caption{Evaluation of the running example (train braking), showing a decreasing ratio of experiments performed to number of queries as the number of queries grows, giving similar results with different fuzzy reuse distance thresholds:\\
    (a) $\{t_m \mapsto 100, t_{F_B}\mapsto 0.05, t_v\mapsto 0.5, t_\mathit{slip}\mapsto 0.05, t_\theta\mapsto 0.1\}$\\ 
    (b) $\{t_m \mapsto 200, t_{F_B}\mapsto 0.1, t_v\mapsto 1, t_\mathit{slip}\mapsto 0.1, t_\theta\mapsto 0.5\}$\\ 
    (c) $\{t_m \mapsto 500, t_{F_B}\mapsto 0.2, t_v\mapsto 2, t_\mathit{slip}\mapsto 0.2, t_\theta\mapsto 1\}$\\ 
    (d) $\{t_m \mapsto 1000, t_{F_B}\mapsto 0.5, t_v\mapsto 4, t_\mathit{slip}\mapsto 0.4, t_\theta\mapsto 2\}$}
    \label{fig:train-ratio}
\end{figure}

\paragraph{\textbf{RQ2}: Does experiment reuse improve the time needed to answer a query?} Based on evaluation of the prototypical implementation of the train braking safety example and battery layout case-study, the time needed to answer a query depends on the (a) nature of the experiment or simulation, (b) the quality and type of the experiment store, and (c) the level of reuse desired.

\Cref{fig:train-time-low} and \Cref{fig:train-time-high} describe the evolution of average time taken in the different configurations.

\begin{itemize}
\item With neither the experiment data nor the experiment metrics stored, the baseline approach with no experiment reuse took $\approx$ 16 milliseconds. 

\item If the experiment metrics are stored (which is required for reuse), the proposed tripartite framework was initially more efficient in answering train safety queries, but as the database size increased, queries started taking longer to process, and the average time increased, crossing the baseline at $\approx$ 50,000 queries.

\item If, for some reason, the experiment metrics are stored, but without any reuse mechanism or inference, the baseline approach took $\approx$ 28 milliseconds. From this, we can estimate that it took $\approx$ 12 milliseconds, to store the experiment metrics in the database. 

\item For full reuse, the experiment traces need to be stored as well. If the experiment traces are stored with the proposed tripartite reuse framework, it took $\approx$ 0.37 seconds to answer a safety query. A significant jump which can be attributed to the amount of time it takes to upload traces of the velocity and displacement of the train at a time-step of 0.01 seconds in each simulation (which is of variable time duration since the simulation is stopped 1 second in simulated time after the velocity reaches zero). However, the benefits of storing the traces, i.e. fuzzy recomputation of new experiment metrics is not explored in the evaluation and is not fully representative of the benefits.

\item If, for some reason, the experiment traces are stored, but without any reuse mechanism or inference, the baseline approach initially took the same amount of time as the tripartite, but rapidly increased as the number of queries increased, indicating a significant load on the database. 
\end{itemize}

In practical terms, the comparison between the baseline without storage and the tripartite framework with experiment metrics storage is the most directly relevant for assessing the potential efficiency gains from reuse. The results indicate that such studies can help in scheduling purges of the reuse-store in order to maximize the efficiency of the tripartite framework, along with faster database operations.

However, the other comparisons indicate that if an organization already stores the experimental data, then the tripartite framework can provide higher efficiency in answering queries. \Cref{fig:train-memory} describes the evolution of the size of the database in cases when all experiment data is stored for the baseline and tripartite approaches. The rate of increase in the size of the database is significantly lower for the tripartite framework compared to the baseline.

\begin{figure}
    \centering
    \includegraphics[width=\linewidth]{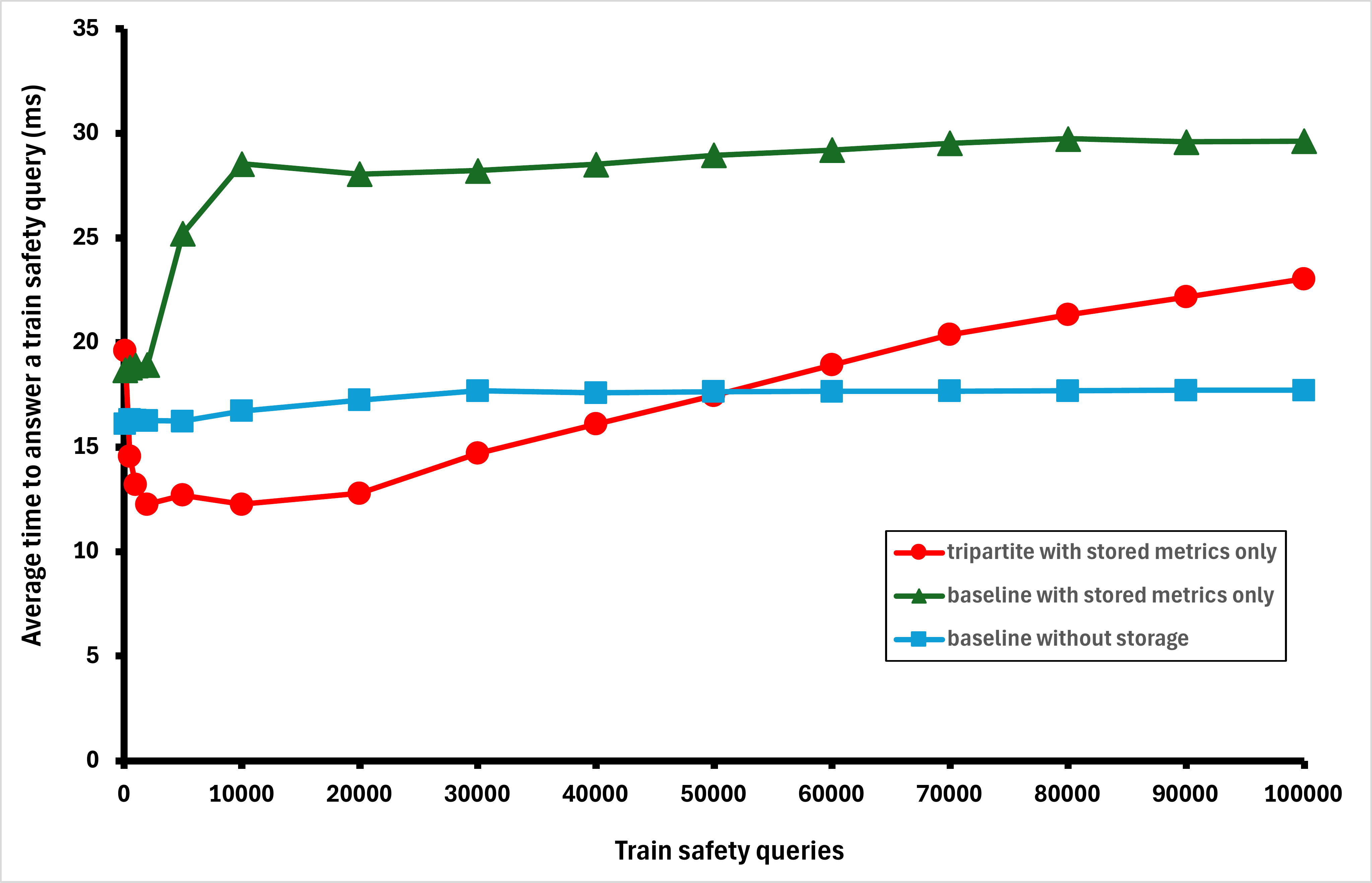}
    \caption{Evaluation of the running example (train braking), showing the evolution of the average time to answer a train safety query as the number of queries asked increases without storing experiment traces}
    \label{fig:train-time-low}
\end{figure}

\begin{figure}
    \centering
    \includegraphics[width=\linewidth]{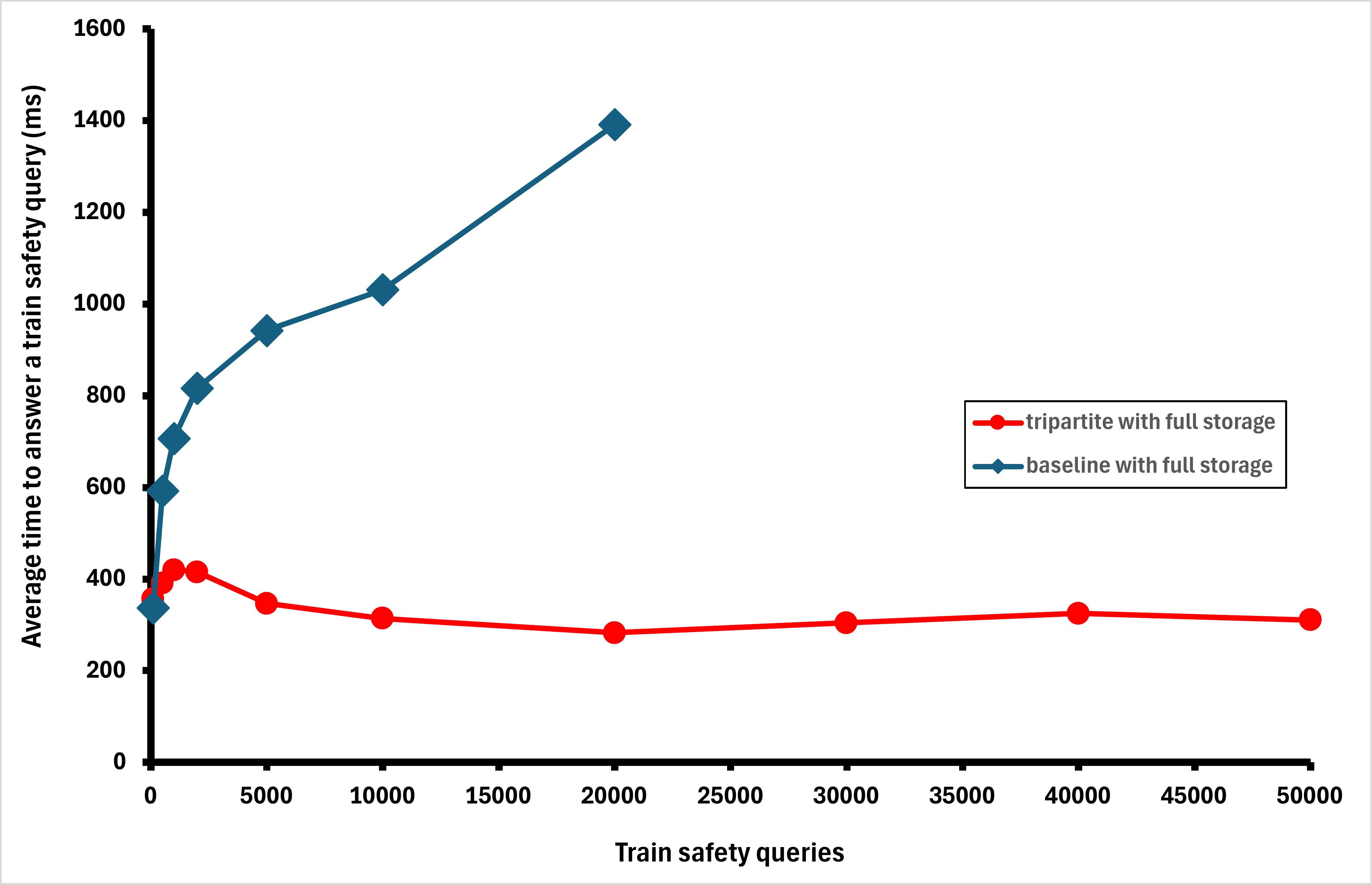}
    \caption{Evaluation of the running example (train braking), showing the evolution of the average time to answer a train safety query as the number of queries asked increases with storage of experiment traces}
    \label{fig:train-time-high}
\end{figure}

\begin{figure}
    \centering
    \includegraphics[width=\linewidth]{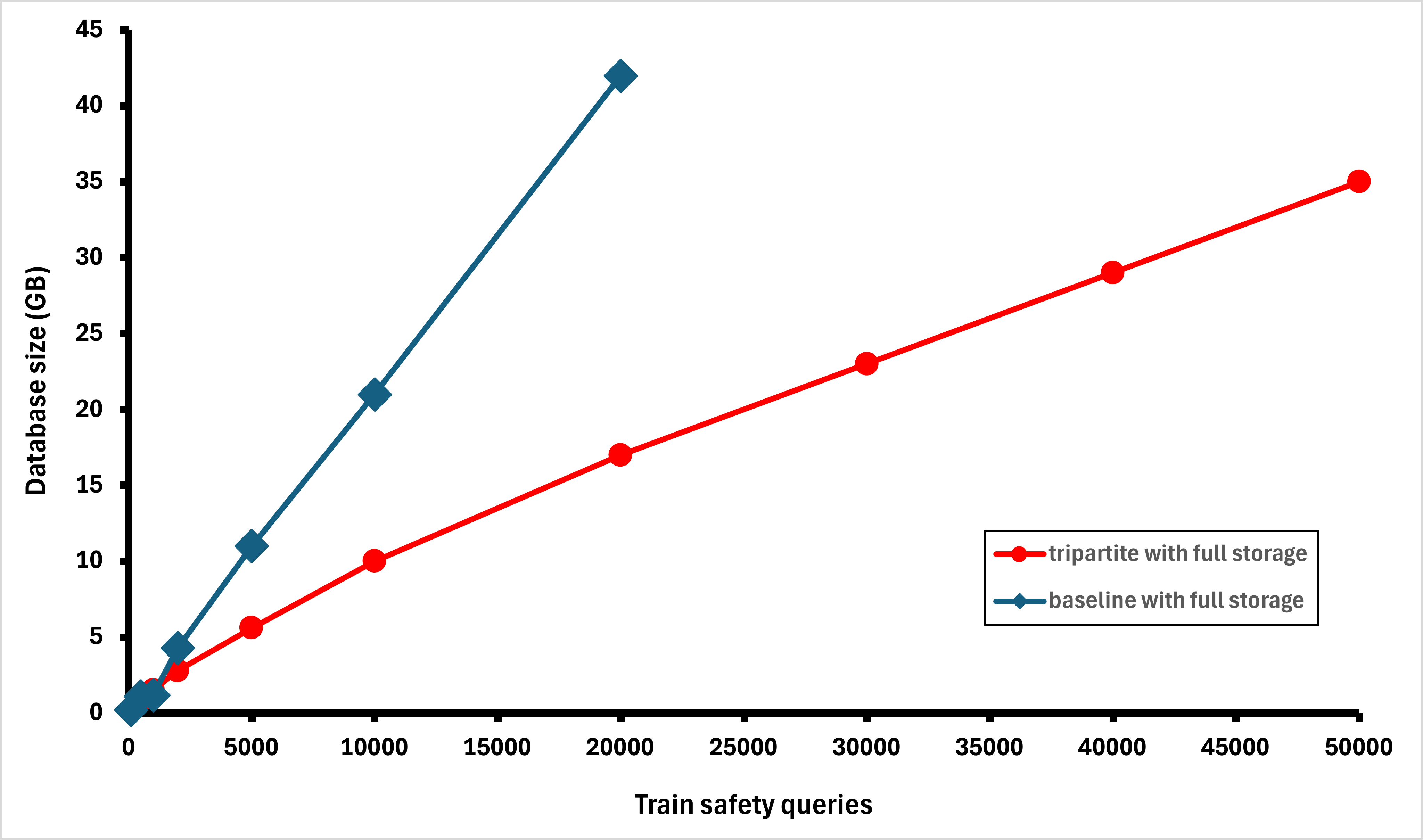}
    \caption{Evaluation of the running example (train braking), showing evolution of database size as more queries are asked, when both, experiment traces and metrics are stored.}
    \label{fig:train-memory}
\end{figure}

\Cref{fig:battery-time} describes the evolution of the average time taken to answer a battery optimization query. The graph uses dual-logarithmic scales (with base 2) to clearly show the profiling data. As can be seen, without reuse, it takes $\approx$ 558 seconds to answer a battery optimization query by performing an experiment for each of the 1000 points in each request. However, with the tripartite reuse framework, it takes an increasingly lower amount of time to answer a query, as the number of queries increases (reducing to and stabilizing to around 3 seconds by the 1000th query). 

\begin{figure}
    \centering
    \includegraphics[width=\linewidth]{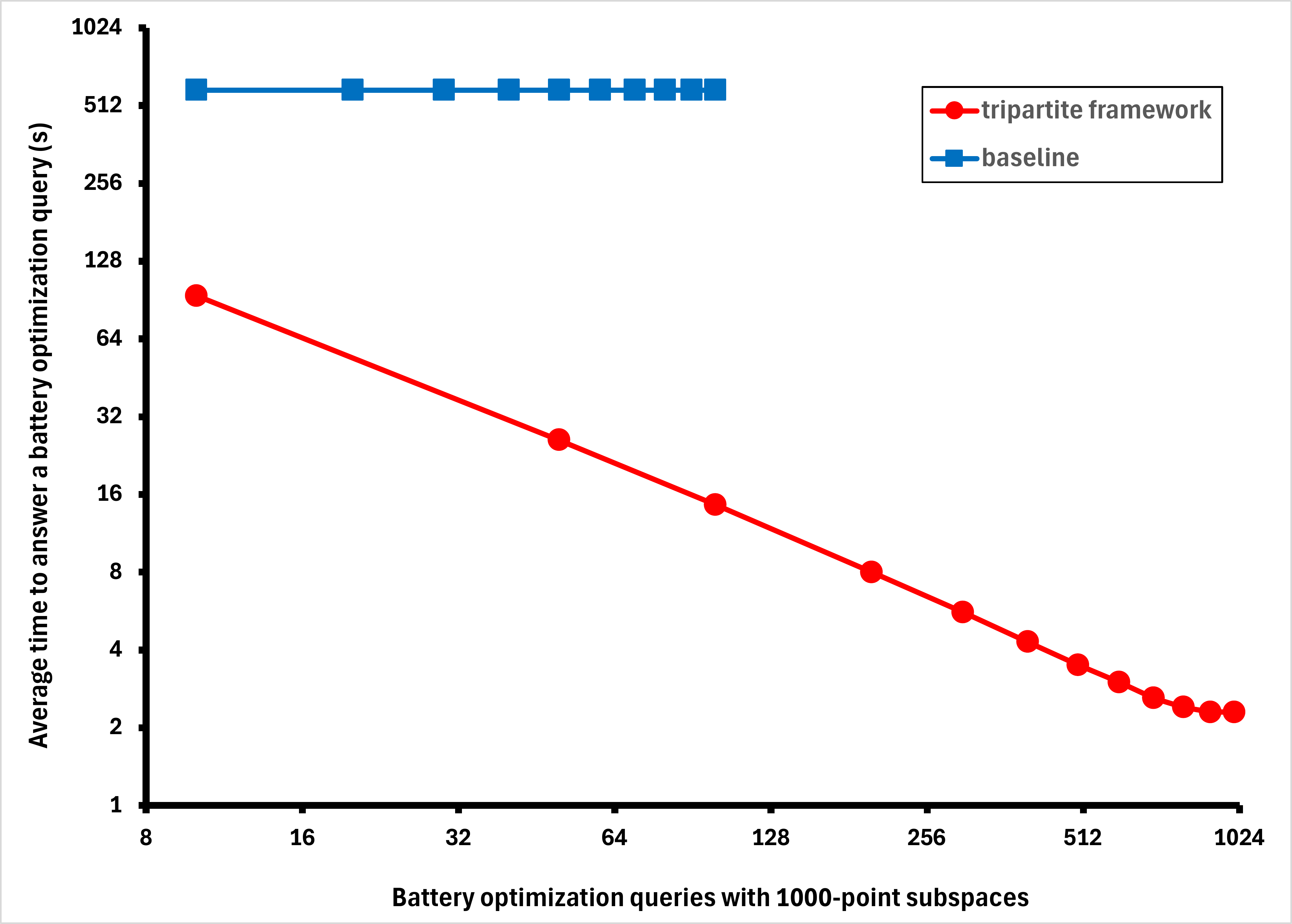}
    \caption{Evaluation of case-study (battery system), showing the evolution of average time taken to answer a query of Pareto-optimal analysis of the battery in a space of 1000 points, as the number of randomly-asked queries increases, on a dual-logarithmic scale (with base 2).}
    \label{fig:battery-time}
\end{figure}

\paragraph{\textbf{RQ3}: Does the framework support explicating the domain knowledge needed for reuse of experiments?} Based on an evaluation of the artifacts of the prototypical implementation, we draw the conclusion that the proposed framework does support explicating (and potential modeling) of the domain knowledge. 

\Cref{fig:is-inferable} describes in a simplified manner, the code for execution of the SQL database query used to infer if a battery configuration can be inferred as unstable based on pre-existing experiments in the experiment store. We can infer that having to explicate the necessary conditions for such a manner serves as grounds for (code) generation of the SQL query itself from a template which is configured based on a knowledge graph (that may possibly hold enough information to infer the inequality relations in the SQL query based on first principles).

For brevity, the code for the inferences in the train running example is described in \ref{sec:appendix-reuse-train-code}.

\begin{figure}
\begin{lstlisting}[language=pseudo, caption={Check if a battery configuration can be inferred as unstable based on pre-existing experiments requests in the SQL-based experiment store: reuse on the decomposition layer.}, label={fig:is-inferable}]
$\text{\textbf{function} }$is_inferable_unstable
$\text{\textbf{Input}:~~} \text{battery\_voltage, internal\_resistance, max\_torque}$
$\text{\textbf{Output}: \!\!\!} \text{Boolean (true if inferable as unstable, else false)}$

// Establish DB connection, and create a cursor to execute queries
conn := open_database_connection() 
cursor := conn.cursor()          

// Execute query to check for prior stable experiments
cursor.execute("""                
    SELECT 1
    FROM experiments
    WHERE
        system = `battery_system' AND
        (frame->>`battery_voltage')::numeric <= %s AND
        (frame->>`max_torque')::numeric <= %s AND
        (frame->>`internal_resistance')::numeric >= %s AND
        (metrics->>`soc')::numeric < 50;
""", (battery_voltage, max_torque, internal_resistance))

if cursor.fetchone() $\neq$ None return true 
return false                            
\end{lstlisting}
\end{figure}

\subsubsection{Discussion and Threats to Validity}

One threat to validity is the simplicity of the prototyped simulations with few configurable parameters. However, the battery case study is taken from an industrial context, so this threat is mitigated by the observation that low-parameter configuration spaces can occur, and benefit on experiment reuse if the underlying simulation is non-trivial. 

As described through the results of the research questions already, the efficiency of the proposed framework depends on the efficiency of its implementation, in particular, the underlying reasoner and databases, the nature of the experiments and the domain. This is mitigated by the use of off-the-shelf components as far as possible.

Our investigations only consider simulation experiments, and we do not perform real-world physical experiments, which even if configured and performed in an automated fashion are still subject to real-world temporal limitations, and traditionally have a higher cost. Our intuition is that the benefit in efficiency of the proposed framework is directly proportional to the average time it takes to perform an experiment.

The exploration of the experiment parameter search space in the Pareto-optimization of the battery was random. Capturing additional domain knowledge, e.g., about which queries are more likely or about typical sequences of queries, may allow to explicate the search strategy in order to maximize the yields in efficiency from case-based reasoning to reuse experiments. 
\section{Related Work}\label{sec:background}
\paragraph{Early Systems Engineering Decision Making} 
%\EK{Other ways to deal with early VnV and how they relate to experiment reuse}

Several systematic studies have been performed on the notion of MBSE~\cite{henderson2021value,cederbladh2023early,ma2022systematic,carroll2016systematic,rashid2015toward}. A commonality between these studies is the argument that MBSE enables \textit{earlier} decision-making \& defect identification and by extension reduces costs during development. However, MBSE is highly multidisciplinary, resulting in a plethora of contextually viable instances of early V\&V. The ``best practice'' early decision-making is therefore not commonly defined~\cite{cederbladh2023early}. It is necessary to address the high degree of uncertainty at early stages~\cite{aroonvatanaporn2010reducing,driscoll2022decision}. One common approach is to re-use other project models directly, accepting differences in scope~\cite{lange2018systematic}. This provides a simple way to re-use existing artifacts, but is often applied ad hoc. Another common application area is the subsequent design of a single model acting as the single source of truth~\cite{abdo2022model,puntigam2020integrated}. While often leading to a greater reduction in uncertainty, this enforces a significant effort for each new development. 

Towards the future it is expected that various silos of an organization should have automatic means of data exchange and collaboration using models~\cite{cederbladh2024road}. In the case of industrial MBSE, this most often takes the form of simulation enabled as an early and continuous capability of development~\cite{puntigam2020integrated,nigischer2021multi,parnell2021mbse}. Kannan \cite{kannan2021formal} discusses how formal representations of individual knowledge can support reasoning in the decision-making process of large-scale complex engineering systems. In \cite{wakers}, the target is to capture the state of knowledge of individuals for formal reasoning to understand whether system designs are consistent with engineering standards and guidelines. 

%Decision-making is a central topic of the SE discipline \cite{driscoll2022decision}. As MBSE centers around models as the primary artefacts during development, the models are increasingly being used for more advanced analysis \cite{cederbladh2023early}. Notably, decision-making in earlier stages can be supported by various methods attached to the model-based representation \cite{russell2012using,menshenin2022model}. Decision-making at the \textit{early} stages of SE should assist practitioners in reducing the risk that the eventual design is wrong, while reducing time to market \cite{cederbladh2023early}. 

%A common means of assisting with decision-making is through trade-off analysis, and a the primary method to perform trade-offs is through simulation \cite{nigischer2021multi,parnell2021mbse}. 

\paragraph{Reuse in Model-Based Systems Engineering}

Systematic reuse is a cornerstone of effective MBSE, aiming to establish structured approaches for leveraging a wide array of engineering artifacts. The successful implementation of such reuse strategies is intrinsically linked to the diverse landscape of MBSE methodologies and the capabilities of supporting tool-chains. De Saqui-Sannes et al.~\cite{de2022taxonomy} provide a taxonomy that categorizes the varied MBSE approaches by languages, tools, and methods, highlighting the context in which reuse must operate. Ma et al.~\cite{ma2022systematic} perform a systematic literature review of MBSE tool-chains, which highlights the importance of tooling in enabling and managing reusable assets across the SE life-cycle. Systematic reuse extends beyond the mere ad-hoc utilization of existing models and includes requirements patterns, architectural designs, verification and validation (V\&V) scenarios, and workflows. The benefits of such systematic reuse are accelerated development cycles, reduced costs, enhanced system quality and consistency, effective preservation and dissemination of domain-specific knowledge, etc. 

Furthermore, ensuring the adaptability and applicability of these assets to new contexts, managing their variability, and maintaining the integrity and traceability of reused elements are critical concerns \cite{clements2001software}. Foundational concepts from Software Product Line Engineering (PLE) provide established methodologies for managing reuse across a family of systems, offering valuable paradigms for MBSE \cite{clements2001software}. More recent advancements include the application of ontologies and semantic technologies to improve the discoverability and interoperability of reusable knowledge \cite{GUO2024}. Additionally, emerging AI-driven techniques are being explored to assist in identifying, recommending, and even automating the integration of suitable reusable components \cite{ai-mbse-reuse}. Effective reuse strategies are necessary for supporting the early and continuous V\&V processes in MBSE. Indeed, our proposed framework is concerned specifically with the reuse of data associated with experiments, and not on the reuse of specific software components. However, we believe the approaches described above our complementary to our framework and possibly domain knowledge can be reused to infer the experiment reuse CBR rules for our framework.

\paragraph{Modeling and Managing (Models of) Experiments}
One of the earliest models of an experiment in the simulation community is the `experimental frame' \cite{ziegler-ef}, which is a specification of the conditions under which the system is observed or
experimented with. This concept materializes as a simulator that consists of coupled components, the Transducer, Acceptor, and Generator, that respectively, observe output from the system, analyze and validate the observation, and provide input to the system \cite{ef-arch}. Collections of `experiment frame's, called `validity frames', are reportedly used to explicate the validity of a model w.r.t the system it models \cite{AckerMVD24, mittal23}. A domain-specific language for simulation experiments, SESSL (Scala-Enabled Simulation Specification Layer), has been proposed to specify and instantiate simulations for a number of simulators using Scala \cite{sessl}. The ESS (EMF Simulation Specification) was used to model and instantiate simulations in Simulink using the Eclipse Modeling Framework \cite{ess}. While the above approaches describe how to model an experiment at a very low-level of abstraction, they do not describe the methodology of deriving such experiment specifications, starting from a basic high-level user question. We believe our approach is compatible and agnostic to the specific language used to model experiment specifications (as listed above).

The numerous models of different experiments need to be stored and effectively managed in order to systematically utilize them. Such tasks often involve maintaining consistency between artifacts, tracking versions, maintaining and updating links, etc. Model management is the set of systematic strategies deployed in order to accomplish the aforementioned tasks. Various model management techniques and tools have been proposed by the academia (Modelverse \cite{modelverse}, OpenFlexo \cite{openflexo}, ACMoM \cite{acmom}, DesignSpace \cite{designspace}, to name a few) and industry (Model Identity Card \cite{10.1145/3652620.3688223}, OpenMBEE \cite{openmbee}, OpenCAESAR \cite{opencaesar}, MagicDraw \footnote{\url{https://www.3ds.com/products/catia/no-magic}}, etc.). The ISO-14258 standard \cite{iso-14258} categorizes the various model management strategies into unification, integration, and federation, each with its own benefits and drawbacks. We consider the possible implementation of our proposed framework within a general model-management framework (among the ones described above) as potential future work.

\paragraph{Intelligent Caching}

Intelligent caching mechanisms are a strategy for optimizing performance and responsiveness within data-intensive environments, particularly those characterized by complex simulations, extensive analyses, and collaborative model development. The fundamental principle involves storing frequently accessed or computationally expensive artifacts—such as simulation results, validated model configurations, derived analytical data, or intermediate model transformation products—to obviate the need for redundant computations or protracted data retrievals. Traditional caching algorithms, such as Least Recently Used (LRU) or Least Frequently Used (LFU), rely on fixed heuristics and often perform suboptimally when faced with dynamic and complex access patterns prevalent in engineering workflows \cite{lru-lfu}. 

The ``intelligent'' aspect of modern caching leverages data-driven algorithms to create adaptive and predictive strategies. For instance, reinforcement learning has been employed to develop cache replacement policies that learn optimal decisions from interactions with the system environment, adapting to workload characteristics over time \cite{cache-replacement-ml}. Deep learning models, particularly Recurrent Neural Networks (RNNs) like LSTMs, have shown promise in predicting future data access patterns based on historical sequences, allowing for proactive caching decisions \cite{cache-replacement-deepl}. Some approaches, like LeCaR, focus on learning an optimal mix of fundamental policies (e.g., recency and frequency) based on their observed performance, effectively adapting the caching strategy to the current workload \cite{vietri2018driving}. We hinted in the evaluation of RQ2 that further evaluations are required in order to optimize the reuse of data from the cache, and keep the time to answer a train safety question low. In this manuscript, we have described what capabilities and features, cachs for experiment data need to have in order to enable reuse, however we leave particular cache optimization strategies as future work since it is our intuition that cache optimization approaches can be very domain- and problem-specific.

\section{Conclusion}\label{sec:conclusion}
Model-based approaches enable early experimenting for early V\&V, but early experiments can require significant cost and effort. We presented a formal framework for the tripartite decomposition of experiment-based information querying platforms, in order to leverage past experiment knowledge, combined with explicated domain knowledge, to reduce the number of experiments needed to perform for an answer within a reasonable threshold of accuracy. We presented our framework using a simplified physics-based running example of a train braking on slippery sloped tracks. Our formalization not only defines a rigorous collection of concepts and relations to rely on, but also establishes a common ground for future reuse architectures that intend to follow the same principles. We also prototyped our framework using the currently commonly available technologies. We evaluated the benefit of our framework via three research questions using an industrially relevant case-study of simulating a battery system FMU. The results corroborated our hypothesis and intuition, while also showing that the gains in efficiency are highly context- and implementation-dependent.

Although our evaluations provided promising results, we still need to explore our framework in different real-world scenarios and contexts. We discussed that our results are highly dependent on the implemented architecture. Exploring different architectures is also in our plans for future work. We only applied our strategy to simulation experiments. Even though we believe that for physical experiments the benefits may be even greater, we still need to provide evidence for that. We know that poor usability is a huge barrier to adoption. Thus, we also plan to investigate appropriate mechanisms for the user to define and execute experiment queries. Finally, in this work, we have exemplified different levels of reusability of experimental results. However, when these results cannot be reused, one possibility is to support the engineer in building experiments to answer the proposed query. The usage of domain knowledge allied to the architectures of previous experiments may contribute to that end. This is another research avenue in our plans. 

Therefore, in this paper, we have shown how knowledge of previously executed experiments may be used to save time and effort for engineers. We also outlined directions for possible research to further develop and strengthen the proposed framework. Engineers may relate some of their routines to part of the steps described in our approach. We believe that making these implicit good practices explicit in a systematic process can bring significant gains to system development processes.

\clearpage
\bibliographystyle{elsarticle-num}
\bibliography{main}
\clearpage
\appendix
\section{Reuse algorithms for the decomposition and execution layers}
\label{sec:appendix-reuse-layers-code}

\begin{figure}[hbp]
\begin{lstlisting}[language=pseudo, caption={Pseudo-code algorithm for reuse on the decomposition layer.}, label = {fig:algo-decomposition}]
$\text{\textbf{function} }$reuse-decomposition
$\text{\textbf{Input}:~~A request}$ $\request \text{ and a store}$ store
$\text{\textbf{Output}: \!\!\!A pair of a store and either a response, or } \bot$

//implements direct reuse
if($\mathsf{getResponse}($store$,\request) = \response$) return $\langle $store$, \response\rangle$

//implements symbolic reasoning
if($\exists \request' \mid \mathsf{getResponse}($store$,\request') = \response' \wedge \justify_R(\request,\request',\response') = \response$)
  return $\langle \mathsf{add}_R($store$,\request,\response), \response\rangle$
end

//implements fuzzy retrieval
if($\exists \request' \mid \mathsf{getResponse}($store$,\request') = \response \wedge |\request,\request'|^\mathsf{get}_\erlang < t^\mathsf{get}_\erlang$)
  return $\langle$store$, \response\rangle$
end

//implements fuzzy recomputation
if($\exists \request' \mid \mathsf{getResults}($store$,\request') \neq \bot \wedge~|\request,\request'|^\mathsf{comp}_\erlang < t^\mathsf{comp}_\erlang$)
  $\response$ := $\compute(\mathsf{getResults}($store$,\request'),\request)$
  return $\langle\mathsf{add}_R($store$,\request,\response), \response\rangle$
end
return $\langle$store$, \bot\rangle$
\end{lstlisting}
\end{figure}

\begin{figure}[hbp]
\begin{lstlisting}[language=pseudo, caption={Pseudo-code algorithm for reuse on the execution layer.}, label={fig:algo-execution}]
$\text{\textbf{function} }$reuse-execution
$\text{\textbf{Input}:~~An experiment specification}$ $\experiment\text{ and a store}$ store
$\text{\textbf{Output}: \!\!\!A pair of a store and either an answer, or } \bot$

//implements direct reuse
if($\mathsf{getResult}($store$,\experiment) = \exresult$) return $\langle $store$, \exresult\rangle$

//implements symbolic reasoning
if($\exists \experiment' \mid \mathsf{getResult}($store$,\experiment') = \exresult' \wedge \justify_E(\experiment,\experiment',\exresult') = \exresult$)
  return $\langle \mathsf{add}_E($store$,\experiment,\exresult), \exresult\rangle$
end

//implements fuzzy retrieval
if($\exists \experiment' \mid \mathsf{getResult}($store$,\experiment') = \exresult \wedge |\experiment,\experiment'|^\mathsf{get}_\exscheme < t^\mathsf{get}_\exscheme$)
  return $\langle$store$, \exresult\rangle$
end
return $\langle$store$, \bot\rangle$
\end{lstlisting}
\end{figure}

\section{Case-based reasoning algorithms for the train running example}
\label{sec:appendix-reuse-train-code}

\begin{figure}[hbp]
\begin{lstlisting}[language=pseudo, caption={Check if a train configuration can be inferred as unsafe based on pre-existing requests in the SQL-based store: reuse on the decomposition layer.}, label={fig:is-inferable-unsafe}]
$\text{\textbf{function} }$is_inferable_unsafe
$\text{\textbf{Input}:~~} \text{brake\_force, mass, init\_velocity, slope, slip\_factor, stop\_dist}$
$\text{\textbf{Output}: \!\!\!} \text{Boolean (true if inferable as unsafe, else false)}$

// Establish DB connection, and create a cursor to execute queries
conn := open_database_connection() 
cursor := conn.cursor()          

// Execute query to check for prior better experiments that were unsafe
cursor.execute("""                
    SELECT 1
    FROM experiments
    WHERE
        system = `train_system' AND
        (frame->>`brake_force')::numeric >= %s AND
        (frame->>`mass')::numeric <= %s AND
        (frame->>`initial_velocity')::numeric <= %s AND
        (frame->>`slope')::numeric >= %s AND
        (frame->>`slip_factor')::numeric <= %s AND
        (metrics->>`stopping_distance')::numeric > %s;
""", (brake_force, mass, init_velocity, slope, slip_factor, stop_dist))

if cursor.fetchone() $\neq$ None return true 
return false                            
\end{lstlisting}
\end{figure}

\begin{figure}[hbp]
\begin{lstlisting}[language=pseudo, caption={Check if a train configuration can be inferred as safe based on pre-existing requests in the SQL-based store: reuse on the decomposition layer.}, label={fig:is-inferable-safe}]
$\text{\textbf{function} }$is_inferable_safe
$\text{\textbf{Input}:~~} \text{brake\_force, mass, init\_velocity, slope, slip\_factor, stop\_dist}$
$\text{\textbf{Output}: \!\!\!} \text{Boolean (true if inferable as safe, else false)}$

// Establish DB connection, and create a cursor to execute queries
conn := open_database_connection() 
cursor := conn.cursor()          

// Execute query to check for prior worse experiments that were safe
cursor.execute("""                
    SELECT 1
    FROM experiments
    WHERE
        system = `train_system' AND
        (frame->>`brake_force')::numeric <= %s AND
        (frame->>`mass')::numeric >= %s AND
        (frame->>`initial_velocity')::numeric >= %s AND
        (frame->>`slope')::numeric <= %s AND
        (frame->>`slip_factor')::numeric >= %s AND
        (metrics->>`stopping_distance')::numeric < %s;
""", (brake_force, mass, init_velocity, slope, slip_factor, stop_dist))

if cursor.fetchone() $\neq$ None return true 
return false                            
\end{lstlisting}
\end{figure}
\clearpage
\tableofcontents

\end{document}